\documentclass{article}
\usepackage[utf8]{inputenc}
\usepackage{algorithm}
\usepackage{algorithmic}
\usepackage{natbib}
\let\cite\citep
\usepackage{amsmath}
\usepackage{amsfonts} 
\usepackage{amsthm}
\usepackage{romannum}
\usepackage{graphicx}

\usepackage{color}
\usepackage{soul}

\usepackage{fullpage}
\pagenumbering{arabic}

\newcommand{\E}{\mathbb{E}}
\newcommand{\R}{\mathbb{R}}

\newcommand{\calD}{\mathcal{D}}

\newcommand{\calN}{\mathcal{N}}

\newcommand{\calT}{\mathcal{T}}

\newcommand{\calV}{\mathcal{V}}

\newcommand{\rmB}{\mathrm{B}}
\newcommand{\rmH}{\mathrm{H}}
\newcommand{\rmL}{\mathrm{L}}
\newcommand{\rmS}{\mathrm{S}}

\newcommand{\IP}{\mathrm{IP}}
\newcommand{\IR}{\mathrm{IR}}
\newcommand{\WBB}{\mathrm{WBB}}

\newcommand{\expost}{\textit{ex post}}
\newcommand{\exante}{\textit{ex ante}}
\newcommand{\interim}{\textit{interim}}

\newtheorem{definition}{Definition}
\newtheorem{lemma}{Lemma}

\newtheorem{theorem}{Theorem}
\newtheorem{example}{Example}

\DeclareMathOperator*{\argmin}{argmin}
\DeclareMathOperator*{\argmax}{argmax}

\newcommand{\SWdone}[1]{}
\newcommand{\osogamidone}[1]{}
\newcommand{\ESdone}[1]{}

\title{Mechanism Learning for Trading Networks
}
\author{
  Takayuki Osogami\footnote{IBM Research - Tokyo.  {\tt osogami@jp.ibm.com}}
  \ \
  Segev Wasserkrug\footnote{IBM Research - Haifa.  {\tt segevw@il.ibm.com}}
  \ \
  Elisheva S. Shamash\footnote{Technion - Israel Institute of Technology.  This work was performed at IBM Research.  {\tt elisheva.shamash@gmail.com}}
}
\date{}

\begin{document}
\pagenumbering{arabic}

\maketitle

\begin{abstract}
    We study the problem of designing mechanisms for trading networks that satisfy four desired properties: dominant-strategy incentive compatibility, efficiency, weak budget balance (WBB), and individual rationality (IR).  Although there exist mechanisms that simultaneously satisfy these properties \expost{} for combinatorial auctions,  we prove the impossibility that such mechanisms do not exist for a broad class of trading networks.  We thus propose approaches for computing and learning the mechanisms that satisfy the four properties, in a Bayesian setting, where WBB and IR, respectively, are relaxed to \exante{} and \interim{}.  For computational and sample efficiency, we introduce several techniques, including game theoretical analysis to reduce the input feature space.  We empirically demonstrate that the proposed approaches successfully find the mechanisms with the four properties for those trading networks where the impossibility holds \expost{}.

\end{abstract}

\section{Introduction}

Trading networks, where firms trade via bi-lateral contracts, are becoming increasingly ubiquitous and essential to the world's economy in vital areas such as transportation and supply chain.
The purpose of such a trading network is both to benefit its participants and to contribute to the external markets.  As such, a major goal of a trading network is \textit{efficiency} in the sense of maximizing the value (social welfare) that the trading network creates \cite{hatfield2013stability}.


The prior work on trading networks primarily investigates solution concepts such as stability and competitive equilibrium \cite{hatfield2013stability,CandoganFullVersionEquilibrium}, which can be shown to imply efficiency.  To compute these solutions for a given trading network, however, one would require the full information about the value of each subset of trades to each firm (or the type of the firm), which is typically private information and needs to be truthfully revealed by the firms.  As we will show, however, the firms typically have incentive to be untruthful.


Our goal in this work is to provide a mechanism for a given trading network such that the firms in the trading network have the incentive to be truthful, which in turn leads to efficiency.  For this purpose, we follow the standard practice in mechanism design of introducing the payment to or from an Independent Party (IP) in the trading network and assume that there is a prior distribution over the types of the firms (i.e., we consider a Bayesian setting).  More specifically, in this setting, we seek to compute (or learn) the mechanism that satisfies weak budget balance (WBB; which ensures non-negative utility of IP) and individual rationality (IR; which ensures non-negative utility of firms) in addition to efficiency and incentive compatibility (which promotes truthfulness).  These four properties are standard in the literature of mechanism design \cite{parkes01}, and, for combinatorial auctions, the VCG mechanism with the Clarke pivot rule is known to satisfy all of the four properties \expost{}, i.e.\ surely for any types, \cite{IntroMechDesign}.

Notice, however, that trading networks are fundamentally more complex than combinatorial auctions.  For example, a trade is a transfer of a good from a seller to a buyer, and the seller who has negative valuation on the trade is compensated by the payment from the buyer.  We will show that, in this more complex settings, it is impossible to simultaneously satisfy all of the four properties \expost{} for a broad class of trading networks.  



We thus relax the WBB and IR properties and seek to compute a mechanism that respectively satisfies these properties \exante{} and \interim{} (i.e., in expectation with respect to the prior distribution over the types) rather than \expost{}.  To this end, we restrict ourselves to the class of Groves mechanisms to ensure incentive compatibility and efficiency and compute the pivot rule of a Groves mechanism via a linear program (LP) that encodes WBB and IR as constraints.

While this LP successfully computes a mechanism that simultaneously satisfies the four properties, it involves two fundamental challenges.  First, the mechanism designer needs the exact knowledge of the prior distribution of types.  Second, the linear program becomes intractable as the number of trades, players, or their types increases.

We thus further provide an approach of learning a mechanism from the sample of the prior distribution of types to mitigate the shortcomings of the computational approach.
With the proposed learning approach, the mechanism designer only requires the access to the sample instead of the precise knowledge of the prior distribution.  In addition, we propose several techniques to reduce the computational complexity in the proposed mechanism learning.  In particular, we manually design special Groves mechanisms, which achieve some but not all of the four properties \expost{}, and demonstrate how the dimension of the feature vector of the types may be reduced based on the knowledge of those special Groves mechanisms.

\paragraph{Contributions}  Our contributions are fourfold.  First, we prove a fundamental theorem that holds for any mechanism for trading networks with payment to/from IP (Theorem~\ref{long:thrm:nopayment}), which shows that we may ignore the payment between firms without loss of generality.  This greatly reduces the set of possible mechanisms that needs to be considered, thereby greatly simplifying the subsequent analysis and proposed algorithms.  Second, we prove the impossibility of simultaneously satisfying the four properties \expost{} in trading networks (Theorem~\ref{long:thrm:impos}), which is analogous to the Myerson–Satterthwaite theorem~\cite{MyeSat83}.  Unlike the Myerson–Satterthwaite theorem, however, our theorem does not require the assumption that the prior distribution on types has absolutely continuous density.
Third, we provide the approaches of automated mechanism design \cite{AMD,AMDCP03} for trading networks, by formulating an optimization problem and its learning variant whose solutions give the mechanisms that achieve the four properties, two in \expost{} and two in \exante{}/ \interim{}, and propose techniques to reduce the computational complexity in computing or learning the solution to these optimization problems (see Sections~\ref{long:sec:compute}-\ref{long:sec:learn}).  These include the technique of reducing the dimensionality of the feature vector based on the manually designed special Groves mechanisms.  Finally, we provide empirical support on the effectiveness of the proposed computational and learning approaches to mechanism design for trading networks (see Section~\ref{long:sec:exp}).





\paragraph{Related Work}

Trading networks and their efficiency have been extensively studied \cite{hatfield2013stability,CandoganFullVersionEquilibrium,ostrovsky2008stability,hatfield2012matching,hatfield2015full}), but no AMD approaches are known to simultaneously achieve the four properties in trading networks.  In fact, there are few studies on incentive compatibility in trading networks with an exception of \citet{schlegel2022structure}, who establishes incentive compatibility but only for those firms who are buyers (or sellers) in all trades.

From methodological perspectives, the prior work most related to ours is some of the AMD approaches for combinatorial auctions, where constrained optimization problems are formulated and approximated with the sample from the prior distribution of types.  However, in \citet{pmlr-v97-duetting19a,rahme2021auction}, DSIC is encoded as a constraint of the optimization problem and is not necessarily guaranteed due to the sample approximation unless the dataset covers the full support of the prior distribution, while we always ensure DSIC via the Groves mechanism.
On the other hand, \citet{MJG18,tacchetti2022learning} learn the mechanism (specifically, the rule of redistributing payment from IP to players) that minimizes the expected revenue of IP, while ensuring the four properties via the VCG mechanism.  In trading networks, however, we can guarantee only two properties via the Groves mechanism, and we encode the other two properties as constraints of our optimization problem. 

\citet{alon2021incomplete} also study an approach of computing (no learning) a mechanism for a principal-agent model as a solution to LP, while ensuring some of the properties via the VCG mechanism.  Earlier work along this line of computing a mechanism within a restricted class of mechanisms includes \citet{AMDRevenueMaximizingAAAI2004}, who maximize expected revenue in combinatorial auctions.  Sample approximation of this method is studied by \citet{AMDRevenueMaximizing2005}.

\section{Trading networks}
\label{long:sec:without}

Following \citet{hatfield2013stability}, we model a trading network by a tuple
$(\calN,\Omega,v)$, where $\calN$ is a set of players
(or firms), $\Omega$ is a set of bi-lateral trades, and 
$v\equiv (v_i)_{i\in\calN}$, where each $v_i:2^\Omega\to\R$ is
the type (valuation) of player $i$.
That is, $v_i(\Phi)$
represents the (possibly negative) value of $\Phi\subseteq\Omega$ for the player with type $v_i$.
Without loss of
generality, we assume that the value of no trade is zero for any
player (i.e., $v_i(\emptyset)=0, \forall i\in\calN$).
Each trade is associated with a seller
and a buyer in $\calN$, and the buyer makes non-negative payment to the seller.  Let $p(\omega)$ be the payment associated with $\omega\in\Omega$.
We refer to a pair
$(\omega,p(\omega))$ as a contract.  A pair of $\Phi$ and $p$ determines a set of contracts\footnote{$p$ may need to specify the payment for $\omega\in\Omega\setminus\Phi$ to discuss some of the properties.}, $\{(\omega,p(\omega)):
\omega\in\Phi\}$.

\SWdone{"Should we perhaps put this last part "but 
$p$ may need to specify the payment for any $\omega\in\Omega$ to discuss some of the properties" in a footnote?"}

\SWdone{I think it makes sense to explicitly say that we enhance the definition in \citet{hatfield2013stability} to include uncertainty on types, whereas \cite{hatfield2013stability} assumes the types are common knowledge. This is an extension of the model to incorporate private information.} 

One's goal with a trading network is to determine a set of trades $\Phi\subseteq\Omega$ to be conducted together with the payment associated with each trade in a way that certain properties are satisfied.  A particularly important metric is the total valuations associated with the trades to be conducted (i.e., $\sum_{i\in\calN} v_i(\Phi)$), since it is the value that the trading network produces to the external market(s).  In particular, we say that the set of trades $\Phi^\star$ is efficient for a trading network $\calT(v)=(\calN,\Omega,v)$ if
\begin{align}
  \sum_{i\in\calN} 
  v_i(\Phi^\star)
  \ge 
  \sum_{i\in\calN}
  v_i(\Phi),
  \forall \Phi\in 2^\Omega.
  \label{long:eq:Efficiency_equivalent}
\end{align}

The efficiency may also be represented in terms of the utilities of players.  Specifically, when the set of contracts $\{(\omega,p(\omega)): \omega\in\Phi\}$ is conducted, the player $i\in\calN$ gets the following (quasi-linear) utility\footnote{In the literature of trading networks (e.g., \citet{hatfield2013stability}), a valuation function is often denoted by $u_i$ and a utility by $U_i$, but we use the notations from mechanism design.  Also, we reserve $u_i$ for the utility defined in Section~\ref{long:sec:with} and use $\tilde u_i$ here.}:
\begin{align}
  \tilde u_i((\Phi,p);\calT(v))
  & = v_i(\Phi)
  + \sum_{\omega\in \Phi_{i \to}} p(\omega)
  - \sum_{\omega\in \Phi_{\to i}} p(\omega),
  \label{long:eq:utility_without}
\end{align}
where $\Phi_{i\to}$ is the subset of $\Phi$ where $i$ is the seller, and $\Phi_{\to i}$ is the one where $i$ is the buyer.
Now, since we have
\begin{align}
\bigcup_{i\in\calN} \Phi_{i \to}
= \bigcup_{i\in\calN} \Phi_{\to i}
= \Phi
\label{long:eq:Phi}
\end{align}
for any $\Phi\subseteq\Omega$,
the total payment from sellers equals the total payment to buyers:
\begin{align}
    \sum_{i\in\calN} \sum_{\omega\in \Phi_{i \to}} p(\omega)
    = \sum_{i\in\calN} \sum_{\omega\in \Phi_{\to i}} p(\omega),
    \label{long:eq:total_payment}
\end{align}
which together with \eqref{long:eq:utility_without} implies
\begin{align}
    \sum_{i\in\calN} \tilde u_i((\Phi,p);\calT(v))
    = \sum_{i\in\calN} v_i(\Phi)
\end{align}
for any $p$.  Therefore, for any $p$, \eqref{long:eq:Efficiency_equivalent} is equivalent to
\begin{align}
  \sum_{i\in\calN} \tilde u_i((\Phi^\star,p); \calT(v))
  \ge 
  \sum_{i\in\calN} \tilde u_i((\Phi,p); \calT(v)),
  \forall \Phi\in 2^\Omega.
  \label{long:eq:Efficiency_without}
\end{align}

\SWdone{Comment: Here's where I think we should introduce the need for efficiency, as well as the equivalence between sum of $v_i$'s and sum of $\tilde u_i$. We can also explain why this is maximizing the external value of the network. If you agree, I can add this section.}
\SWdone{Actually, I think it would be good here to introduce and justify all four properties:
\begin{itemize}
    \item Efficiency, as we want the network to be efficient and avoid double marginalization.
    \item Truthfulness, so it is clear to participants how to behave.
    \item IR, so that it is worthwhile for participants to participate.
\end{itemize}
It may even be possible to skip the competitive equilibrium part and just mention it later when we discuss how to compute the optimal allocation.

Note that all previous works implicitly assumed an IP that computed the set of contracts. What they didn't assume were payments.
}

The prior work has developed several algorithms for finding the set of contracts that achieve the properties that are stronger than efficiency (e.g., competitive equilibrium and stability) under certain conditions (e.g., full substitutability \cite{FullSubstitutability}).  However, they all rely on the knowledge of the types $v$.

In practice, however, the types are private information and need to be revealed by the players.  Then, the players can have the incentive to be untruthful, which in turn leads to inefficiency, as the following example shows:
\begin{example}
  Consider a trading network with a single potential trade,
  $\Omega=\{\omega\}$, between two players, $\calN=\{\rmS,\rmB\}$.
  With $\omega$, the seller (S) incurs a production cost $C_\rmS$, and
  the buyer (B) gets a profit $1-C_\rmB$ (retail price minus handling
  cost): i.e., $v_\rmS(\Phi)=-C_\rmS\,I\{\Phi=\Omega\}$ and
  $v_\rmB(\Phi)=(1-C_\rmB)\,I\{\Phi=\Omega\}$, where $I\{\cdot\}$ is
  the indicator function.  Consider the payment,
  $p(\omega)=\frac{1+C_\rmS-C_\rmB}{2}$, that equally shares the net
  profit between S and B.  Given the types (i.e., $C_S$ and $C_B$), we can
  achieve efficiency by letting $\Phi=\Omega$ if $C_\rmS+C_\rmB<1$ and
  $\Phi=\emptyset$ otherwise.  When $C_\rmS+C_\rmB<1$, the utility of
  each player is $\frac{1-C_\rmS-C_\rmB}{2}$.  However, then each
  player $i$ has the incentive to untruthfully declare slightly
  higher cost $C_i+\varepsilon$ to get slightly higher utility
  $\frac{1-C_\rmS-C_\rmB+\varepsilon}{2}$.  If the declared cost is
  too high, the trade is not conducted.  Since
  $1-C_\rmS-C_\rmB=v_\rmS(\Omega)+v_\rmB(\Omega)>v_\rmS(\emptyset)+v_\rmB(\emptyset)=0$,
  untruthfulness indeed leads to inefficiency.
  \label{long:ex:nontruthful}
\end{example}

\section{Introducing Payment to/from IP}
\label{long:sec:with}

We consider the trading networks where the types are private information of the respective players and study the mechanisms that promote truthfulness and hence lead to efficiency.  To facilitate truthfulness, the mechanism may require each player $i$ to make (possibly negative) payment $\tau_i\in\R$ to an independent party (IP).  A trading network with IP is denoted by $\calT^+(v)=(\calN^+,\Omega,v)$, where $\calN^+\equiv\calN\cup\{\IP\}$ denotes the set of all players and IP.  Also, let $\calV_i$ be the space of types for each $i\in\calN$ and let $\calV\equiv\times_{i\in\calN}\calV_i$. Moreover, let $\calT^+(\calV)\equiv(\calN,\Omega,\calV)\equiv\{\calT^+(v): v\in\calV\}$ be the set of trading networks with IP under $\calV$.  We will simply refer to $\calT^+(v)$ or $\calT^+(\calV)$ as a trading network.

We study the Bayesian setting where there exists a prior distribution $q$ over $\calV$ such that the players have types $v$ with probability $q(v)$.  Throughout, we assume that the true type $v_i$ of each player $i\in\calN$ is known by that player.  We will, however, make varying assumptions on other knowledge about the types, which will be made explicit in each of the results in the following.


\osogamidone{I am not sure if we exclusively study Bayesian setting.  Why DSIC instead of BNIC?  Why ex post?  Bayesian with uncertainties in the prior?  Ex-ante properties are useless if prior is wrong.}

\subsection{Direct Mechanisms}

We study a direct mechanism, where players and IP act according to the following protocol:
\begin{enumerate}
    \item Each player $i$ declares a (possibly untruthful) type $\hat v_i\in\calV_i$
    \item IP determines the set of contracts, $\{(\omega,\pi(\omega;\hat v)): \omega\in\phi(\hat v)\}$, and (possibly negative) payment, $\tau_i(\hat v)$, from each player $i$ to IP
\end{enumerate}
Hence, the direct mechanism of a trading network $\calT(\calV)$ is specified by an
outcome rule $(\phi,\tau,\pi)$, where $\phi: \calV\to2^\Omega$ is the
allocation rule that maps declared types, $\hat
v\equiv(\hat v_i)_{i\in\calN}$, to a set of trades $\phi(\hat
v)$ to be conducted; $\tau$ determines the rule of payment to IP,
$\tau_i\in\calV\to\R$, for each $i\in\calN$; $\pi:2^\Omega\times\calV\to\R$ is the payment rule that determines the payment associated with each trade, depending on the declared types.  In the following, we
refer to a direct mechanism simply as a mechanism.

When $\hat v$ is declared by the players in the trading network $\calT^+=(\calN^+,\Omega,v)$, each player $i\in\calN$ gets the following utility under the mechanism $(\phi,\tau,\pi)$:
\begin{align}
    u_i(\hat v; (\phi,\tau,\pi), \calT^+(v))
    & = v_i(\phi(\hat v))
    + \sum_{\omega\in \phi(\hat v)_{i \to}} \pi(\omega;\hat v)
    - \sum_{\omega\in \phi(\hat v)_{\to i}} \pi(\omega;\hat v)
    - \tau_i(\hat v).
    \label{long:eq:original_utility}
\end{align}
We denote the net-payment to IP (or utility of IP) by
\begin{align}
    u_{\IP}(\hat v; (\phi,\tau,\pi),\calT^+(v))
    & = \sum_{i\in\calN} \tau_i(\hat v).
    \label{long:eq:original_utility_IP}
\end{align}

\subsection{Desirable Properties}
\label{long:sec:property}


We consider the four desirable properties of a mechanism that is standard in mechanism design \cite{parkes01}: Dominant Strategy Incentive Compatibility (DSIC), Efficiency, Weak Budget Balance (WBB), and Individual Rationality (IR). These properties are known to be simultaneously achievable in combinatorial auctions, but we will see that such positive results do not carry over to trading networks.  We formally define each property in the following, but note that, under the Bayesian setting, we can discuss both \expost{} properties (which hold surely for any $v\in\calV$) and \exante{} or \interim{} properties (which hold in expectation with respect to the prior distribution $q$ over $\calV$).


\SWdone{
  I suggest to say that this is similar to the work done for auctions, but the setting here is significantly different, due to the fact that for each trade, it has to be assigned to both parties, or not assigned to either. This, for example means that the VCG mechanism does not guarantee WBB and IR, as opposed to the auction setting
}

\osogamidone{What is the exact statement that we can make about the VCG auction?  Is it possible to achieve DSIC, Efficiency, \expost{} WBB, and \expost{} IR?  Or \exante{}?  Any reference?  Depends on types (gros? substitute: corresponding property in our setting is AFS)?  In slack discussion.  welfare maximization is considered to be solved.  our setting is not.
\begin{itemize}
    \item For single good, Vickrey auction satisfies DSIC, Efficiency, \expost{} WBB, and \expost{} IR
    \item For combinatorial auction, VCG auction satisfies DSIC, Efficiency, \expost{} WBB, and \expost{} IR
    \item WBB in auction means that auctioneer does not have to pay
    \item challenges are in computational complexity
    \item revenue maximization
\end{itemize}
\url{https://www.researchgate.net/publication/229048813_Lectures_on_Combinatorial_Auctions}
}

We will formally define each of these properties in the following (see \citet{parkes01} for more details).  Note that, under the Bayesian setting, we can discuss either \expost{} properties (which hold surely for any $v\in\calV$) or \exante{}/\interim{} properties (which hold in expectation with respect to the prior distribution $q$ over $\calV$).


DSIC is an \expost{} property and ensures that the best strategy of each player is truthfully revealing its type regardless of the strategies of the other players.  This gives a clear course of actions.  Formally, we say that a mechanism $(\phi,\tau,\pi)$ for a trading network $\calT^+(\calV)$ is DSIC if the profile of truthful strategies form a dominant-strategy equilibrium:
\begin{align}
  u_i((v_i,v_{-i});(\phi,\tau,\pi),\calT^+(v))
  & \ge u_i((v_i',v_{-i});(\phi,\tau,\pi),\calT^+(v)), \forall (v,v_i')\in\calV\times\calV_i
  \label{long:eq:DSIC}
\end{align}
where $v_{-i}$ is the strategy profile of players except $i$.  We will discuss the corresponding \exante{} property of Bayesian Nash Incentive-Compatibility (BNIC) only in relation to the prior work, but BNIC only ensures that a truthful player can maximize its \emph{expected} utility under the additional condition that other players are truthful.



Efficiency is also an \expost{} property and ensures that the trades are allocated in a way that they maximize total value.  Following the definition without IP in \eqref{long:eq:Efficiency_equivalent}, we say that a mechanism $(\phi,\tau,\pi)$ for $\calT^+(\calV)$ is Efficient if
\begin{align}
  \sum_{i\in\calN} 
  v_i(\phi(v))
  \ge 
  \sum_{i\in\calN}
  v_i(\Phi),
  \forall (v,\Phi)\in \calV\times 2^\Omega.
  \label{long:eq:Efficiency}
\end{align}
We will not discuss the corresponding \exante{} property.

WBB ensures nonnegative utility of IP.  We say that a mechanism $(\phi,\tau,\pi)$ for $\calT^+(\calV)$ is \expost{} WBB if
\begin{align}
  u_\IP(v;(\phi,\tau,\pi),\calT^+(v))
  \ge 0, \forall v\in\calV
  \label{long:eq:expostWBB}
\end{align}
and \exante{} WBB if
\begin{align}
  \E\left[u_\IP(v;(\phi,\tau,\pi),\calT^+(v))\right]
  \ge 0,
  \label{long:eq:exanteWBB}
\end{align}
where the expectation is with respect to the prior distribution of $v$.  That is, under the additional condition that the players are truthful (which is guaranteed with DSIC), the utility of IP is non-negative surely under \expost{} WBB and in expectation under \exante{} WBB.

IR ensures that every player gets non-negative utility.  IR can also be \expost{} or \exante{}.  However, when players know their own types, each player should require non-negative expected utility \emph{given its own type}, and such a property is called \interim{} IR.  Formally, we say that a mechanism $(\phi,\tau,\pi)$ for $\calT^+(\calV)$ is \expost{} IR if
\begin{align}
  u_i(v;(\phi,\tau,\pi),\calT^+(v))
  \ge 0, \forall v\in\calV
  \label{long:eq:expostIR}
\end{align}
for each $i\in\calN$ and \interim{} IR if
\begin{align}
  \E\left[ u_i(v;(\phi,\tau,\pi),\calT^+(v)) \mid v_i \right]
  \ge 0, \forall v_i\in\calV_i
  \label{long:eq:interimIR}
\end{align}
for each $i\in\calN$.  That is, under the additional condition that the players are truthful (guaranteed with DSIC), a truthful player gets non-negative utility surely under \expost{} IR and in expectation under \interim{} IR.

Note that \expost{} properties are more desirable than the corresponding \exante{} properties.  For example, maximizing expected utility may not be the objective of risk-sensitive players.  Also, the optimality of the truthful strategy under BNIC relies on the truthfulness of the other players, which is not required under DSIC.

\subsection{No Payment between Players}
\label{long:sec:with:payment}

We claim that, as far as the utilities of the forms in \eqref{long:eq:original_utility}-\eqref{long:eq:original_utility_IP} are concerned, we do not lose generality by ignoring the payment among players (i.e., $\pi(\omega;\hat v)=0, \forall (\omega,\hat v)$).  Formally, 
\begin{lemma}
  For any mechanism $(\phi,\tau,\pi)$ and a payment rule $\pi'$ for a trading network
  $\calT^+(\calV)=(\calN^+,\Omega,\calV)$, there exists a rule of payment to IP $\tau'$
  such that $u_i(\hat
  v;(\phi,\tau,\pi),\calT^+(v))=u_i(\hat v;(\phi,\tau',\pi'),\calT^+(v))$ for any $(\hat
  v,v,i)\in\calV^2\times\calN^+$.
  \label{long:lemma:nopayment}
\end{lemma}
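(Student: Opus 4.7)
The plan is to explicitly construct $\tau'$ so that every player's utility is preserved, and then verify that the IP's utility is automatically preserved thanks to the cash-flow balance identity \eqref{long:eq:total_payment}.

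First, I would read off from \eqref{long:eq:original_utility} what $\tau_i'(\hat v)$ must be in order to match $u_i(\hat v;(\phi,\tau,\pi),\calT^+(v))$ once $\pi$ is replaced by $\pi'$. Setting the two expressions for $u_i$ equal and cancelling the common $v_i(\phi(\hat v))$ term, we are forced to define
\begin{align*}
\tau_i'(\hat v) \;\equiv\; \tau_i(\hat v) \;+\; \Delta_i(\hat v), \qquad \text{where } \Delta_i(\hat v) \;\equiv\; \sum_{\omega\in\phi(\hat v)_{i\to}} \!\bigl[\pi'(\omega;\hat v) - \pi(\omega;\hat v)\bigr] \;-\; \sum_{\omega\in\phi(\hat v)_{\to i}} \!\bigl[\pi'(\omega;\hat v) - \pi(\omega;\hat v)\bigr].
\end{align*}
By construction $u_i(\hat v;(\phi,\tau',\pi'),\calT^+(v))=u_i(\hat v;(\phi,\tau,\pi),\calT^+(v))$ for every $i\in\calN$ and every $(\hat v,v)\in\calV^2$, since $v_i$ appears in both utilities identically and all inter-player payment terms cancel by design.

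It remains to handle the IP, i.e.\ to show $\sum_{i\in\calN}\tau_i'(\hat v)=\sum_{i\in\calN}\tau_i(\hat v)$, which by \eqref{long:eq:original_utility_IP} gives $u_\IP(\hat v;(\phi,\tau',\pi'),\calT^+(v))=u_\IP(\hat v;(\phi,\tau,\pi),\calT^+(v))$. Summing the correction $\Delta_i(\hat v)$ over $i\in\calN$, and applying the identity \eqref{long:eq:total_payment} separately to the payment rules $\pi$ and $\pi'$ with $\Phi=\phi(\hat v)$, the seller-side and buyer-side sums each equal $\sum_{\omega\in\phi(\hat v)}\pi'(\omega;\hat v)$ (resp.\ $\sum_{\omega\in\phi(\hat v)}\pi(\omega;\hat v)$) and cancel, so $\sum_{i\in\calN}\Delta_i(\hat v)=0$. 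This yields the desired equality of IP utilities and completes the proof.

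The argument is essentially algebraic bookkeeping, so there is no real obstacle; the only thing to be careful about is the sign convention (payment to a seller counts positively in $\tilde u$ and in $u$, payment by a buyer counts negatively), so that the telescoping over $\calN$ in \eqref{long:eq:total_payment} kills both the $\pi$ and the $\pi'$ contributions to $\sum_i \Delta_i(\hat v)$. No further assumption on $\phi$, $\tau$, $\pi$, $\pi'$ or on the type space $\calV$ is needed, so the claim holds at the level of raw mechanisms, independently of DSIC, Efficiency, WBB, or IR.
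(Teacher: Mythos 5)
Your construction of $\tau_i'(\hat v)=\tau_i(\hat v)+\Delta_i(\hat v)$ is exactly the paper's \eqref{long:eq:payment_conversion}, and your use of the balance identity \eqref{long:eq:total_payment} to show $\sum_{i\in\calN}\Delta_i(\hat v)=0$ and hence preserve the IP's utility is precisely the paper's argument. The proof is correct and essentially identical to the paper's.
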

\begin{proof}
Let
\begin{align}
    \tau_i'(\hat v)
    & =
    \tau_i(\hat v)
    - \sum_{\omega\in \phi(\hat v)_{i \to}} \pi(\omega;\hat v)
    + \sum_{\omega\in \phi(\hat v)_{\to i}} \pi(\omega;\hat v)
    + \sum_{\omega\in \phi(\hat v)_{i \to}} \pi'(\omega;\hat v)
    - \sum_{\omega\in \phi(\hat v)_{\to i}} \pi'(\omega;\hat v).
    \label{long:eq:payment_conversion}
\end{align}
Then, by \eqref{long:eq:original_utility}, we
have $u_i(\hat v; (\phi,\tau,\pi), \calT^+(v))=u_i(\hat v; (\phi,\tau',\pi'),\calT^+(v))$ for
any $(v,i)\in\calV\times\calN$.  
Also, \eqref{long:eq:total_payment} implies $u_{\IP}(\hat v; (\phi,\tau,\pi),\calT^+(v))=u_{\IP}(\hat v; (\phi,\tau',\pi'),\calT^+(v))$ for any $v\in\calV$ by the definition of $u_{\IP}$ in \eqref{long:eq:original_utility_IP}.
\end{proof}

The following theorem implies that we only need to consider the mechanisms without payment between players to achieve the four properties:
\begin{theorem}
  For any mechanism $(\phi,\tau,\pi)$ and a payment rule $\pi'$ for
  $\calT^+(\calV)=(\calN^+,\Omega,\calV)$, there exists a rule of payment to IP, $\tau'$
  such that each of DSIC, Efficiency, WBB
  (either \exante{} or \expost{}), and IR (either \exante{} or \interim{}) is
  satisfied with $(\phi,\tau,\pi)$ iff it is satisfied
  with $(\phi,\tau',\pi')$.
  \label{long:thrm:nopayment}
\end{theorem}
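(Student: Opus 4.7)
The plan is to deduce the theorem as an almost immediate corollary of Lemma~\ref{long:lemma:nopayment}. The key observation is that each of the four properties, in each of their \expost{}/\exante{}/\interim{} variants, is defined purely as an (in)equality on either the player utilities $u_i$ or the IP utility $u_\IP$ (or, in the case of Efficiency, on quantities that depend only on $\phi$ and $v$, not on $\tau$ or $\pi$). Lemma~\ref{long:lemma:nopayment} already supplies a $\tau'$ (explicitly given by \eqref{long:eq:payment_conversion}) such that the player and IP utilities under $(\phi,\tau,\pi)$ and $(\phi,\tau',\pi')$ coincide pointwise on $\calV^2$. Hence it suffices to verify that each property is preserved by pointwise equality of utilities.

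Concretely, I would fix the $\tau'$ from the lemma and then check the four properties one by one. For DSIC, the defining inequality \eqref{long:eq:DSIC} is of the form $u_i(\cdot)\ge u_i(\cdot)$; since both sides are equal under the two mechanisms for every choice of $(v_i,v_{-i},v_i')$, the inequality transfers in both directions. For Efficiency, \eqref{long:eq:Efficiency} mentions neither $\tau$ nor $\pi$, so the property trivially coincides for $(\phi,\tau,\pi)$ and $(\phi,\tau',\pi')$ because they share the same allocation rule $\phi$. For \expost{} WBB \eqref{long:eq:expostWBB} and \exante{} WBB \eqref{long:eq:exanteWBB}, the conditions are $u_\IP\ge 0$ pointwise or $\E[u_\IP]\ge 0$; equality of $u_\IP$ for all $\hat v,v$ gives both the pointwise and the expectation-level equivalences. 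For \expost{} IR \eqref{long:eq:expostIR} and \interim{} IR \eqref{long:eq:interimIR}, the same argument applies with $u_i$ in place of $u_\IP$, using that conditional expectations of equal random variables (conditioned on $v_i$) are equal.

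There is essentially no obstacle: the only thing one must be careful about is that Lemma~\ref{long:lemma:nopayment} produces a single $\tau'$ that works uniformly for all $(\hat v,v,i)$, so that the same $\tau'$ simultaneously witnesses the equivalence of all four properties; this is indeed the case since the construction \eqref{long:eq:payment_conversion} is independent of which property one wishes to preserve. The short formal proof will therefore read as ``let $\tau'$ be given by Lemma~\ref{long:lemma:nopayment}; by pointwise equality of utilities, each of the four defining (in)equalities holds for $(\phi,\tau,\pi)$ iff it holds for $(\phi,\tau',\pi')$, and Efficiency is unaffected since it does not depend on the payment rules.''
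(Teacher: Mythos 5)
Your proposal is correct and follows essentially the same route as the paper: invoke Lemma~\ref{long:lemma:nopayment} to get a single $\tau'$ under which all utilities coincide pointwise, then observe that each property is an (in)equality in those utilities and hence transfers in both directions. The only cosmetic difference is in the Efficiency step, where you note directly that \eqref{long:eq:Efficiency} involves only $\phi$ and $v$, while the paper first rewrites it as an equivalent condition on $\sum_{i\in\calN^+} u_i$; both observations are valid and yield the same conclusion.
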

\begin{proof}
  By Lemma~\ref{long:lemma:nopayment}, for any pair of $(\phi,\tau,\pi)$ and $\pi'$, there
  exists $\tau'$ such that utilities of the players and the IP
  under $(\phi,\tau,\pi)$ are the same as those under $(\phi,\tau',\pi')$ for any $v\in\calV$.  Hence, it suffices to show
  that each of the properties depend only on the utilities of the
  players and the IP.  For example, DSIC \eqref{long:eq:DSIC} depends only
  on the utilities; hence, if $(\phi,\tau,\pi)$ is DSIC, i.e.,
  \begin{align}
    u_i((v_i,v_{-i});(\phi,\tau,\pi),\calT^+(v))
    \ge u_i((v_i',v_{-i});(\phi,\tau,\pi),\calT^+(v)), \forall(v,v_i')\in\calV\times\calV_i,
  \end{align}
  then, since $u_i(v;(\phi,\tau,\pi),\calT^+(v))=u_i(v;(\phi,\tau',\pi'),\calT^+(v))$ for any $v$, we must have
  \begin{align}
    u_i((v_i,v_{-i});(\phi,\tau',\pi'),\calT^+(v))
    \ge u_i((v_i',v_{-i});(\phi,\tau',\pi'),\calT^+(v)), \forall(v,v_i')\in\calV\times\calV_i,
  \end{align}
  i.e., $(\phi,\tau')$ is DSIC.  The converse also holds.

  Since it is straightforward to see that \expost{} WBB \eqref{long:eq:expostWBB}, \exante{} WBB \eqref{long:eq:exanteWBB}, \expost{} IR \eqref{long:eq:expostIR}, and \interim{} IR \eqref{long:eq:interimIR} also depend only on the utilities of the players and the IP, it remains to show that Efficiency depends only on the utilities.  Now, observe that \eqref{long:eq:Efficiency} is equivalent to
\begin{align}
    \sum_{i\in\calN^+} u_i(v;(\phi,\tau,\pi),\calT^+(v))
    \ge 
    \sum_{i\in\calN^+} u_i(v;(\phi',\tau,\pi),\calT^+(v)),
    \forall v\in \calV, \forall\phi',
\end{align}
which depends only on the utilities of the players and the IP, which establishes the theorem.
\end{proof}


Theorem~\ref{long:thrm:nopayment} implies that we only need to consider mechanisms with no payment between players as long as the desirable properties are the ones that depend only on the utility.  Namely, for any set of properties that depend on the utility, there exists a mechanism with no payment between players that achieves those properties if and only if there exists a mechanism (with an arbitrary payment rule) that achieves those properties.  Hence, we will only consider designing a mechanism $(\phi,\tau,\pi_0)$ without payment between players and denote such a mechanism by $(\phi,\tau)$.  Intuitively, if there is a payment $p(\omega)$ from a buyer to a seller, we can let the buyer pay $p(\omega)$ to IP and let IP pay $p(\omega)$ to the seller without changing the net-payment of the two players and IP.

\section{Impossibility}
\label{long:sec:imposs}

Here, we show that no mechanism $(\phi,\tau)$ can achieve all of the four properties \expost{} for trading networks with a single potential trade except for trivial cases.  Theorem~\ref{long:thrm:nopayment} then implies that no mechanisms $(\phi,\tau,\pi)$, with any payment rule $\pi$, can simultaneously achieve the four properties except for trivial cases.  Formally,
\begin{definition}
 We say that a trading network $(\calN^+,\{\omega\},\calV)$ with a single trade between two players, $|\calN|=2$, has non-trivial $\calV$ if it satisfies both of the following conditions:
  \romannum{1}) there exists $v\in\calV$ such that $\sum_{j\in\calN} v_j(\{\omega\})<0$, and
  \romannum{2}) for any $v_i\in\calV_i$, there exists $v_{-i}\in\calV_{-i}$ such that $\sum_{j\in\calN} v_j(\{\omega\})>0$, where $\{i,-i\}\in\calN$.
  \label{long:def:trivial}
\end{definition}
In other words, $\calV$ is said to be trivial if at least one of the two conditions in Definition~\ref{long:def:trivial} are violated.
When condition \romannum{1}) is violated, the trade should always be conducted (i.e., $\phi(\hat v)=\{\omega\}, \forall \hat v\in\calV$) for Efficiency to hold, and all of the four properties are trivially satisfied with a constant payment rule\footnote{The simple example in \cite{SandhomOthman2009PervasiveMS}, where impossibility is shown not to hold, violates condition \romannum{1}.}.  Specifically, consider a seller (S) and a buyer (B), and let $\underline{v}_i\equiv\min_{v_i\in\calV_i} v_i(\{\omega\})$ be the minimum value of the trade for each $i\in\calN=\{\rmS,\rmB\}$.  Since condition \romannum{1}) is violated, we have $\underline{v}_\rmS+\underline{v}_\rmB\ge 0$.  Without loss of generality, let $\underline{v}_\rmS \le \underline{v}_\rmB$.  Then, with the constant payment rule of $\tau'_\rmS(\hat v)=-\underline{v}_\rmB$ and $\tau'_\rmB(\hat v)=\underline{v}_\rmB$, it is straightforward to verify that the four conditions are satisfied.

When condition \romannum{2}) is violated, there exists a player $i$ who has the type $v_i$ that makes it impossible to make the social welfare, $v_i(\{\omega\})+v_{-i}(\{\omega\})$, positive no matter what types $v_{-i}$ that the other player has.  Such player $i$ should not participate in the trading network from the perspective of social welfare, even if all of the four properties may be achieved with some mechanisms.

For other non-trivial cases, no mechanisms achieve all of the four properties.  Formally,
\begin{theorem}
  For any trading network $(\calN^+,\{\omega\},\calV)$ with a single trade between two players having non-trivial $\calV$, no mechanisms can achieve all of DSIC, Efficiency, \expost{} WBB, and \expost{} IR.
  \label{long:thrm:impos}
\end{theorem}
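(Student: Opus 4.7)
The plan is to invoke Theorem~\ref{long:thrm:nopayment} to restrict attention to mechanisms $(\phi,\tau)$ without payment between players (so each player $i$'s utility collapses to $v_i(\phi(v))-\tau_i(v)$), and then exhibit a $2\times 2$ subgrid of types on which DSIC, Efficiency, \expost{} WBB, and \expost{} IR are jointly inconsistent. Since $\Omega=\{\omega\}$ and $v_i(\emptyset)=0$, each type is determined by the scalar $v_i(\{\omega\})\in\R$, so I identify $\calV_i$ with a subset of $\R$; Efficiency then forces $\phi(v)=\{\omega\}$ whenever $v_1+v_2>0$ and $\phi(v)=\emptyset$ whenever $v_1+v_2<0$.

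First I would use the non-triviality conditions to extract four types. By condition (\romannum{1}) of Definition~\ref{long:def:trivial}, pick $(a,b)\in\calV_1\times\calV_2$ with $a+b<0$; by condition (\romannum{2}) applied to each of $a$ and $b$, pick $d\in\calV_2$ with $a+d>0$ and $c\in\calV_1$ with $c+b>0$. Combining $c>-b$ and $d>-a$ with $a+b<0$ yields $c+d>-a-b>0$. Hence Efficiency fixes the allocation on the grid $\{a,c\}\times\{b,d\}$: no trade at $(a,b)$, and a trade at each of $(a,d)$, $(c,b)$, $(c,d)$.

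Next I would extract two key DSIC equalities. At $v_2=d$, both reports $a$ and $c$ by player 1 yield a trade, so player 1's utility across these two reports differs only through $\tau_1$; applying DSIC in both directions gives $\tau_1(a,d)=\tau_1(c,d)$. Symmetrically, at $v_1=c$ player 2 always trades, so $\tau_2(c,b)=\tau_2(c,d)$. Applying \expost{} IR at the two ``mixed'' profiles then gives $\tau_1(a,d)\le a$ (from $a-\tau_1(a,d)\ge 0$) and $\tau_2(c,b)\le b$ (from $b-\tau_2(c,b)\ge 0$).

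Finally, I would derive the contradiction at the ``both-high'' profile $(c,d)$: \expost{} WBB requires $\tau_1(c,d)+\tau_2(c,d)\ge 0$, but the two DSIC equalities combined with the two IR bounds give
\[
  \tau_1(c,d)+\tau_2(c,d)=\tau_1(a,d)+\tau_2(c,b)\le a+b<0.
\]
The main obstacle is the first step — translating the abstract conditions of Definition~\ref{long:def:trivial} into a concrete grid on which all relevant welfare sums are strictly signed (so Efficiency is unambiguous and both IR inequalities are tight-enough to combine with WBB). Once the grid is in place, the rest is a short arithmetic chain, and notably no richness or continuous-density hypothesis on $\calV$ enters, consistent with the strengthening over Myerson--Satterthwaite advertised in the introduction.
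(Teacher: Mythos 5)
Your proposal is correct and follows essentially the same route as the paper: fix the allocation on a $2\times 2$ grid of types via Efficiency and non-triviality, use DSIC to equate the relevant payments at the mixed and both-high profiles, bound them via \expost{} IR at the mixed profiles, and contradict \expost{} WBB at the both-high profile. The only (harmless, arguably cleaner) difference is that you extract the four types directly from conditions \romannum{1}) and \romannum{2}) of Definition~\ref{long:def:trivial}, whereas the paper takes the $\argmin$/$\argmax$ types $v_i^\rmL,v_i^\rmH$, which implicitly assumes those extrema are attained.
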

\begin{proof}
  Let $\calN=\{\rmS,\rmB\}$ be the two players.  For each $i\in\calN$, let $v_i^\rmL\equiv\argmin_{v_i\in\calV_i} v_i(\{\omega\})$ be the type that gives the lowest value on the trade, and let $v_i^\rmH\equiv\argmax_{v_i\in\calV_i} v_i(\{\omega\})$ be the one giving the highest value.  We say that player $i$ with $v_i^\rmL$ has type $X$ for $X\in\{\rmL,\rmH\}$.  We will show that we cannot simultaneously satisfy the conditions for the four properties associated with the players of those types (L and H).

  We first consider the allocation rule $\phi(X,Y)$, where $(X,Y)$ are the types declared by (S, B).  Efficiency \eqref{long:eq:Efficiency} and the non-triviality of $\calV$ imply that we must have
  \begin{align}
    \phi(X,Y) = \left\{\begin{array}{ll}
    \emptyset & \mbox{if } (X,Y)=(\rmL,\rmL)\\
    \{\omega\} & \mbox{otherwise}
    \end{array}\right.
    \label{long:eq:impos:allocation}
  \end{align}

  Next, consider the rule of payment to IP, $\tau'(X,Y)$.  DSIC \eqref{long:eq:DSIC} requires among others that
  \begin{align}
    v_\rmS^\rmH(\phi(\rmH,\rmH)) - \tau'_\rmS(\rmH,\rmH)
    & \ge v_\rmS^\rmH(\phi(\rmL,\rmH)) - \tau'_\rmS(\rmL,\rmH)
    \label{long:eq:impos:dsic1}\\
    v_\rmS^\rmL(\phi(\rmL,\rmH)) - \tau'_\rmS(\rmL,\rmH)
    & \ge v_\rmS^\rmL(\phi(\rmH,\rmH)) - \tau'_\rmS(\rmH,\rmH)\\
    v_\rmB^\rmH(\phi(\rmH,\rmH)) - \tau'_\rmB(\rmH,\rmH)
    & \ge v_\rmB^\rmH(\phi(\rmH,\rmL)) - \tau'_\rmB(\rmH,\rmL)\\
    v_\rmB^\rmL(\phi(\rmH,\rmL)) - \tau'_\rmB(\rmH,\rmL)
    & \ge v_\rmB^\rmL(\phi(\rmH,\rmH)) - \tau'_\rmB(\rmH,\rmH).
    \label{long:eq:impos:dsic4}
  \end{align}
  Since $\phi(\rmH,\rmH)=\phi(\rmL,\rmH)=\phi(\rmH,\rmL)=\{\omega\}$ by \eqref{long:eq:impos:allocation}, we can reduce \eqref{long:eq:impos:dsic1}-\eqref{long:eq:impos:dsic4} to
  \begin{align}
    \tau'_\rmS(\rmL,\rmH) & = \tau'_\rmS(\rmH,\rmH)
    \label{long:eq:impos:DSIC:1}\\
    \tau'_\rmB(\rmH,\rmL) & = \tau'_\rmB(\rmH,\rmH).
    \label{long:eq:impos:DSIC:2}
  \end{align}
  \expost{} IR \eqref{long:eq:expostIR} requires among others that
  \begin{align}
    \tau'_\rmS(\rmL,\rmH) & \le v_\rmS^\rmL(\phi(\rmL,\rmH)) = v_\rmS^\rmL(\{\omega\})
    \label{long:eq:impos:IR:1}\\
    \tau'_\rmB(\rmH,\rmL) & \le v_\rmB^\rmL(\phi(\rmH,\rmL)) = v_\rmB^\rmL(\{\omega\})
    \label{long:eq:impos:IR:2}
  \end{align}
  By \eqref{long:eq:impos:DSIC:1}-\eqref{long:eq:impos:IR:2}, we must have
  \begin{align}
    \tau'_\rmS(\rmH,\rmH) + \tau'_\rmB(\rmH,\rmH) \le v_\rmS^\rmL(\{\omega\}) + v_\rmB^\rmL(\{\omega\})
  \end{align}

  Since $v_\rmS^\rmL(\{\omega\}) + v_\rmB^\rmL(\{\omega\})<0$ by condition~\romannum{1}) of Definition~\ref{long:def:trivial}, this contradicts a requirement from \expost{} WBB \eqref{long:eq:expostWBB}:
  \begin{align}
    \tau'_\rmS(\rmH,\rmH) + \tau'_\rmB(\rmH,\rmH) \ge 0
  \end{align}
\end{proof}

\citet{MyeSat83} show similar impossibility but assumes that the prior distribution over $\calV$ has absolutely continuous density, which does not hold e.g.\ for discrete $\calV$ \cite{SandhomOthman2009PervasiveMS}.  Our impossibility theorem does not require such an assumption, but ours is with DSIC and hence weaker in that respect than Myerson-Satterthwaite's, which establishes impossibility with BNIC.

\section{Computing Mechanisms}
\label{long:sec:compute}



The impossibility suggested by Theorem~\ref{long:thrm:impos} motivates us to study the mechanisms that achieve weaker properties of \exante{} WBB and \interim{} IR in addition to DSIC and Efficiency.  These weaker properties are still meaningful in practice and are sufficient for risk-neutral players.






To guarantee DSIC and Efficiency, we rely on the Groves mechanism.  For any trading network $(\calN^+,\Omega,\calV)$, a mechanism $(\phi,\tau')$ is called a Groves mechanism if it can be represented by the use of a pivot rule, $h\equiv(h_i)_{i\in\calN}$ where $h_i: \calV_{-i}\to\R$, as follows:
\begin{align}
  \phi(v)
  & = \phi^\star(v)
  \in \argmax_{\Phi\subseteq\Omega} \sum_{i\in\calN} v_i(\Phi) \label{long:eq:groves_allocation}\\
  \tau'_i(v)
  & = h_i(v_{-i}) - \sum_{j\neq i} v_j(\phi^\star(v)) \label{long:eq:groves_payment}
\end{align}
for any $v\in \calV$.  Here, the pivot rule $h_i$ determines the payment, in addition to the second term of the right-hand side of \eqref{long:eq:groves_payment}, in a way that it depends only on the types of other players (and not the type of player $i$).  The Groves mechanism is guaranteed to satisfy Efficiency and DSIC, and we may arbitrarily choose $h$ to achieve other properties.  We refer to $h$ as a Groves mechanism when it does not cause confusion.


\SWdone{Note: Are we focusing on deterministic mechanisms, or can they be probabilistic, i.e. probabilistically chose a maximizing assignment?}
\osogamidone{They can be probabilistic, and our results hold surely (for any realization of probabilistically chosen assignment) if they are probabilistic.}

With a Groves mechanism $h$, the condition for \exante{} WBB \eqref{long:eq:exanteWBB} is reduced to
\begin{align}
  0
  & \le \E\left[\sum_{i\in\calN} \tau'_i(v)\right] \label{long:eq:groves-wbb:from}\\
  & = \E\left[
    \sum_{i\in\calN} \left( h_i(v_{-i}) - \sum_{j\neq i} v_j(\phi^\star(v)) \right)
    \right] \\
  & = \sum_{i\in\calN} \E\left[ h_i(v_{-i}) \right]
  - (|\calN| - 1) \sum_{i\in\calN} \E\left[ v_i(\phi^\star(v)) \right],
  \label{long:eq:groves-wbb}
\end{align}
and \interim{} IR \eqref{long:eq:interimIR} is reduced to
\begin{align}
  0
  & \le \E\left[ 
    v_i(\phi^\star(v)) - \tau'_i(v)
    \mid v_i
  \right] \label{long:eq:groves-ir:from}\\
  & = \E\left[ 
    v_i(\phi^\star(v)) - h_i(v_{-i}) + \sum_{j\neq i} v_j(\phi^\star(v))
    \mid v_i
    \right]\\
  & = \sum_{j\in\calN} \E\left[ v_j(\phi^\star(v)) \mid v_i \right]
  - \E\left[ h_i(v_{-i}) \mid v_i \right].
  \label{long:eq:groves-ir}
\end{align}


There may or may not exist an $h$ that satisfies both \exante{} WBB and \interim{} IR.  When it does not exist, one could seek to find the $h$ that minimizes the violation of these conditions.  However, here we focus on studying whether such an $h$ exists and finding one if it does.  Also, when there are multiple $h$ that satisfy both \exante{} WBB and \interim{} IR, we prefer the one with stronger budget balance (i.e., the total payment to IP should be as close to zero as possible).

\SWdone{Note: Maybe we should emphasize that we have formulated this as the following optimization problem, as this is one of our contributions. In general, we should probably emphasise in the text where our contributions are} 

Therefore, we proposed to find $h$ as the solution to the following optimization problem:
\begin{align}
  \min_{h}\
  & \sum_{i\in\calN} \E\left[ h_i(v_{-i}) \right] \label{long:eq:lp:obj}\\
  \mbox{s.t. }
  & \sum_{i\in\calN} \E\left[ h_i(v_{-i}) \right]
  \ge (|\calN| - 1) \sum_{i\in\calN} \E\left[ v_i(\phi^\star(v)) \right] \label{long:eq:lp:wbb}\\
  & \E\left[ h_i(v_{-i}) \mid v_i \right]
  \le \sum_{j\in\calN} \E\left[ v_j(\phi^\star(v)) \mid v_i \right], \forall v_i\in\calV_i, \forall i\in\calN \label{long:eq:lp:ir}
\end{align}
where $\E$ is the expectation with respect to the prior distribution of types.  The constraint \eqref{long:eq:lp:wbb} is on \exante{} WBB and corresponds to \eqref{long:eq:groves-wbb}.  The constraint \eqref{long:eq:lp:ir} is on \interim{} IR and corresponds to \eqref{long:eq:groves-ir}.  for a given trading network $(\calN^+,\Omega,\calV)$, the right-hand sides of \eqref{long:eq:lp:wbb}-\eqref{long:eq:lp:ir} are constants that can be computed by the use of the allocation rule $\phi^\star$ and the prior distribution of types
When $\calN$ and $\calV$ are finite, the optimization problem \eqref{long:eq:lp:obj}-\eqref{long:eq:lp:ir} is a linear program (LP), where variables are $h_i(v_{-i})$ for $v_{-i}\in\calV, i\in\calN$.  When there are $n=|\calN|$ players, and each player $i$ has $m=|\calV_i|$ types, the LP has $n \, m^{n-1}$ variables and $n\,m+1$ constraints.

\begin{example}
  Consider the trading network in Example~\ref{long:ex:nontruthful}, but we now assume that each player has one of two types: the production cost of the seller of type $X$ is $C_\rmS^X$ for $X\in\{\rmL,\rmH\}$, and the handling cost of the buyer of type $Y$ is $C_\rmB^Y$ for $Y\in\{\rmL,\rmH\}$.  Let $C_i^\rmL < C_i^\rmH$ for $i\in\{\rmS,\rmB\}$.  By Theorem~\ref{long:thrm:impos}, there is no mechanism that simultaneously achieves the four properties \expost{} when $c_\rmS^\rmH+c_\rmB^\rmH > 1$ and $c_\rmS^X+c_\rmB^Y < 1$ for $(X,Y)\in\{(\rmH,\rmL),(\rmL,\rmH),(\rmL,\rmL)\}$.  Under the Groves mechanism, the amount of payment to IP from S of type $X$ and B of type $Y$ is respectively given by
  \begin{align}
    \tau'_S(X,Y) & = h_\rmS(Y) - (1-C_\rmB^Y) \, q^\star(X,Y) \\
    \tau'_B(X,Y) & = h_\rmB(X) + C_\rmS^X \, q^\star(X,Y),
  \end{align}
  where $q^\star(X,Y)\equiv I\{\phi^\star(X,Y)=\Omega\}$ is the number of goods to be traded when the types of S and B are $X$ and $Y$ respectively.  Note that the pivot rule of each player depends on the type of the other player and can be determined by solving the optimization problem \eqref{long:eq:lp:obj}-\eqref{long:eq:lp:ir}, which is reduced to the following for this example:
  \begin{align}
    \min_{h}\ 
    & \E[h_\rmS(Y) + h_\rmB(X)] \label{long:ex:two:obj}\\
    \mbox{s.t.}\
    & \E[h_\rmS(Y) + h_\rmB(X)] \ge \E[(1-C_\rmS^X-C_\rmB^Y) \, q^\star(X,Y)] \\
    & \E[h_\rmS(Y)\mid X] \le \E[(1-C_\rmS^X-C_\rmB^Y) \, q^\star(X,Y) \mid X], \forall X\in\{\rmL,\rmH\} \\
    & \E[h_\rmB(X)\mid Y] \le \E[(1-C_\rmS^X-C_\rmB^Y) \, q^\star(X,Y) \mid Y], \forall Y\in\{\rmL,\rmH\}
    \label{long:ex:two:constraint}
  \end{align}
  where the expectation is with respect to the prior distribution of types.  This is an LP with four variables $h\equiv(h_\rmS(\rmL),h_\rmS(\rmH),h_\rmB(\rmL),h_\rmB(\rmH))$ and four constraints.
\label{long:ex:two}
\end{example}

Although the solution to the optimization problem \eqref{long:eq:lp:obj}-\eqref{long:eq:lp:ir} indeed gives non-trivial Groves mechanisms that are of practical interest, it is intractable except for small trading networks.  For example, \romannum{1}) there can be infinitely many types $v_i$, which results in infinitely many variables in the optimization problem.  Also, \romannum{2}) one may not know the exact prior distribution of types, which we need to compute the expectation in \eqref{long:eq:lp:obj}-\eqref{long:eq:lp:ir}.  In addition, \romannum{3}) it may be hard to compute the efficient allocation $\phi^\star(v)$ for a given $v\in\calV$. 


\section{Learning Mechanisms}
\label{long:sec:learn}

Among the three challenges, \romannum{1}) and \romannum{2}) are fundamental, since one cannot even represent the optimization problem with infinitely many variables or without knowing the prior distribution.  Here, we propose learning techniques to overcome these challenges.  For the computational complexity of $\phi^\star$, efficient algorithms are known under some conditions on $\calV$ \cite{CandoganFullVersionEquilibrium,IwataSubmodularFlow2005}.



\subsection{Learning Pivot Rules}
\label{long:sec:learn:pivot}

We now assume that the mechanism designer (IP) has access to the sample $\calD$ of types from the prior distribution $q$, rather than the exact knowledge on $q$.  The expectation in \eqref{long:eq:lp:obj}-\eqref{long:eq:lp:ir} can then be replaced with sample average, which solves Challenge \romannum{2}).  IP may collect such $\calD$ by running a Groves mechanism that is not necessarily \exante{} WBB (hence, IP needs investment).  Since players act truthfully under the Groves mechanism, the collected $\calD$ should follow $q$\footnote{Assuming that players (or their types) are sampled independently each time.}.
One may then learn a better Groves mechanism (which less violates WBB) and collect additional data with the new mechanism for learning even better mechanisms (with the hope of eventually achieving \exante{} WBB).

Recall that the variables in \eqref{long:eq:lp:obj}-\eqref{long:eq:lp:ir} correspond to the output values of the pivot rule, $h_i(v_{-i})$ for $v_{-i}\in\calV_{-i}$ and $i\in\calN$, which constitute the codomain of the functions $h=(h_i)_{i\in\calN}$.  To deal with infinitely many variables that stem from the functions with infinite codomain, we approximate those functions with machine learning models, $h^\theta\equiv(h_i^{\theta_i})_{i\in\calN}$ where $h_i^{\theta_i}: \calV_{-i}\to\R$ with parameter $\theta_i$ for each $i$.

Such approaches of machine learning have been successfully applied in the prior work of mechanism learning.
Unfortunately, in our case, it is not sufficient to replace the expectation with sample average and $h$ with $h^\theta$.  In particular, the constraint \eqref{long:eq:lp:ir} is problematic.  Since the set $\{\tilde v\in\calD\mid\tilde v_i=v_i\}$ is often small and can be empty for some $v_i\in\calV_i$, the sample average is unreliable or cannot be obtained with the empty set.  A solution to this problem is to learn, for each $i\in\calN$, a regressor (e.g., Gaussian process regressor) that maps $v_i$ to $\sum_{j\in\calN} \E[v_j(\phi^\star(v))\mid v_i]$ for $v_i\in\calV_i$, where the training data is $\hat D_i \equiv \{(\tilde v_i, \sum_{j\in\calN} \tilde v_j(\phi^\star(\tilde v)))\}_{\tilde v\in\calD}$.  Let $\hat g_i$ be such a regressor trained with $\hat D_i$ for $i\in\calN$.

We can now reduce \eqref{long:eq:lp:obj}-\eqref{long:eq:lp:ir} to the following constrained non-linear optimization and learn a locally optimal $\theta$ with the augmented Lagrangian method:
\begin{align}
  \min_{\theta}\
  & \sum_{\tilde v\in\calD} \sum_{i\in\calN} h_i^{\theta_i}(\tilde v_{-i}) \label{long:eq:learn:obj}\\
  \mbox{s.t. }
  & \sum_{\tilde v\in\calD} \sum_{i\in\calN} h_i^{\theta_i}(\tilde v_{-i}) 
  \ge (|\calN| - 1) \sum_{\tilde v\in\calD} \sum_{i\in\calN} \tilde v_i(\phi^\star(\tilde v)) \label{long:eq:learn:wbb}\\
  & \frac{1}{|\{v\in\calD\mid v_i=\tilde v_i\}|} \sum_{v\in\calD\mid v_i=\tilde v_i} h_i^{\theta_i}(\tilde v_{-i}) 
  \le \hat g_i(\tilde v_i),
  \forall \tilde v_i\in\calD_i, \forall i\in\calN \label{long:eq:learn:ir}
\end{align}
where $\calD_i\equiv\{v_i\in\calV_i\mid \exists \tilde v\in\calD \mbox{ s.t. } \tilde v_i=v_i\}$ is the set of the types of player $i$ that appear in $\calD$.
More precisely, we use a lower confidence bound for $\hat g(\tilde v_i)$ in \eqref{long:eq:learn:ir} and an upper confidence bound for the right-hand side of \eqref{long:eq:learn:wbb}.
The learning problem \eqref{long:eq:learn:obj}-\eqref{long:eq:learn:ir} matches the optimization problem \eqref{long:eq:lp:obj}-\eqref{long:eq:lp:ir} in the limit of $|\calD|\to\infty$ when the law of large numbers applies to the sample averages (e.g., $\tilde v_i(\phi^\star(\tilde v))$ has finite variance), the regressors are asymptotically consistent (i.e., $\hat g_i(v_i)\to\E[v_j(\phi^\star(v))\mid v_i]$), the optimal $h$ is in the class of $h^\theta$ (i.e., realizable), the support of the prior distribution covers the whole $\calV$, and $\calV$ is finite.  

Although the use of $h^\theta$ alleviates Challenge \romannum{1}), the issue of computational complexity still remains.  Namely, each $h_i^{\theta_i}$ takes types $v_{-i}$ as its input, but each type is a function $v_j:2^\Omega\to\R$.  A question is how to represent those functions.  One may for example assume that the types are in a certain parametric family of functions (e.g., neural networks); then their parameters may be given as the input to $h_i^{\theta_{-i}}$ \cite{faccio2021parameterbased}.

We take an alternative approach of representing a function with its output values \cite{arxiv.2002.11833}.  Observe that $v_i:2^\Omega\to\R$ can be fully characterized by a $2^{|\Omega|}$-dimensional valuation-vector, $(v_i(\Phi))_{\Phi\subseteq\Omega}$, that represents the output values for all of the possible inputs.  Each $h_i: \calV_{-i}\to\R$ can then be seen as a function that maps $|\calN|-1$ vectors, each having $2^{|\Omega|}$ dimensions, to a real number; namely, $h_i:\R^{2^{|\Omega|}\,(|\calN|-1)}\to\R$ (see Figure~\ref{long:fig:h-wbb-ir}(a)).

\begin{figure}[t]
\begin{minipage}{0.49\linewidth}
  \centering
    \includegraphics[width=0.9\linewidth]{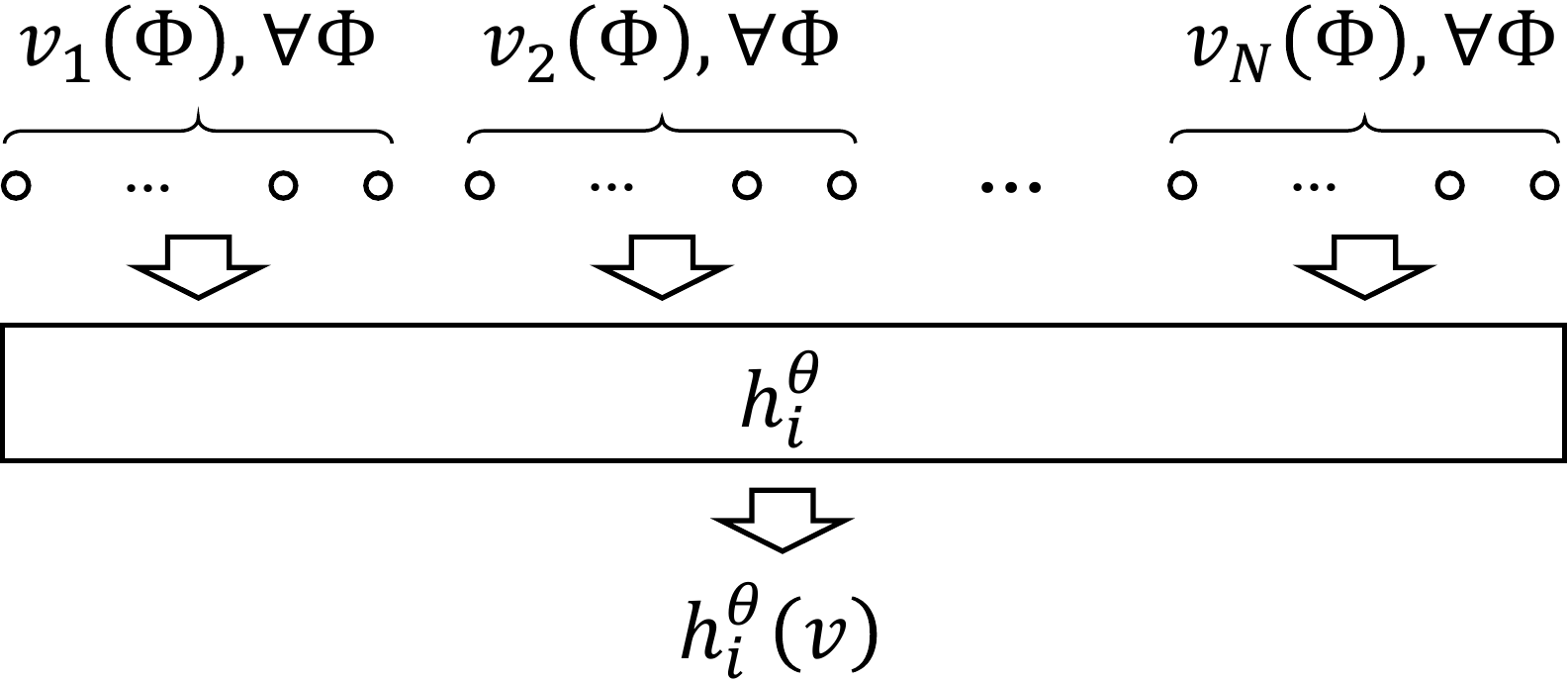}\\
    (a)
\end{minipage}
\begin{minipage}{0.49\linewidth}
    \centering
    \includegraphics[width=0.9\linewidth]{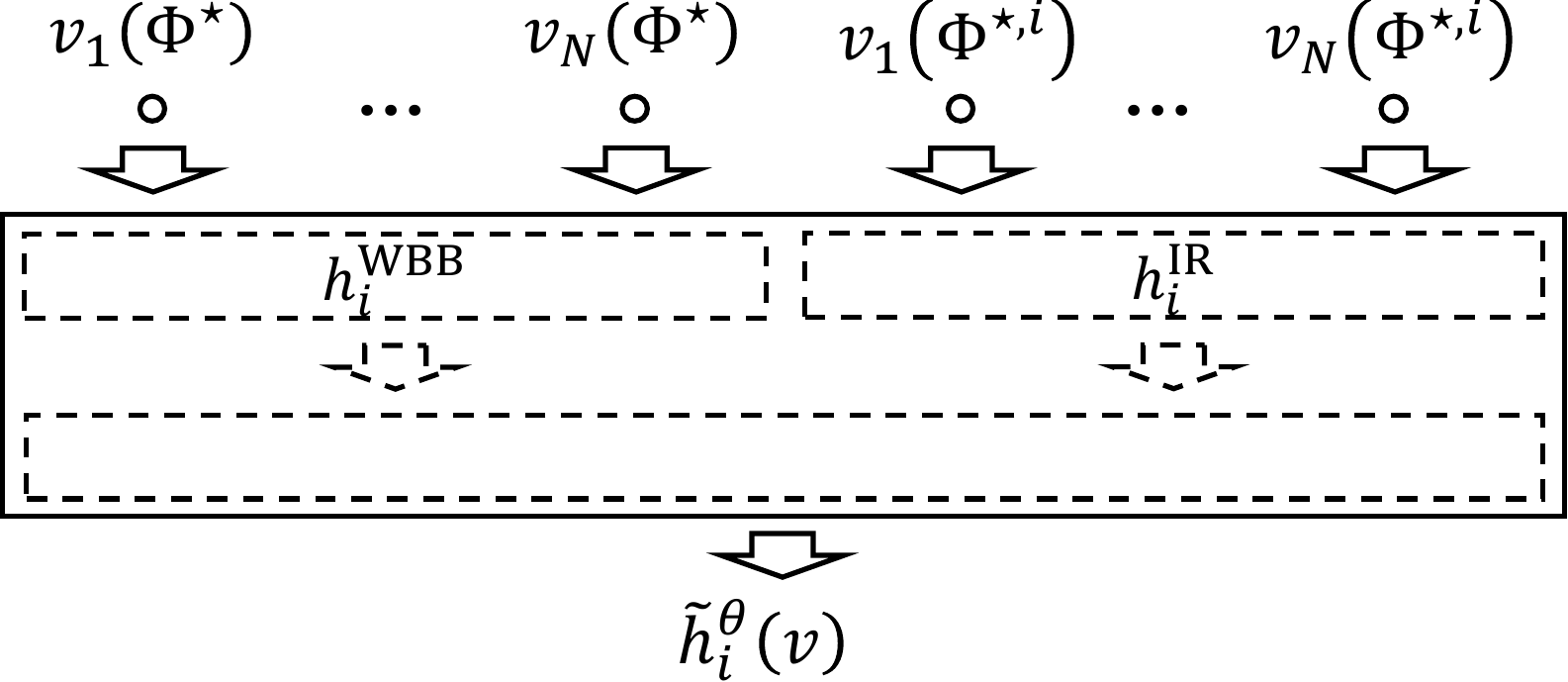}\\
    (b)
\end{minipage}
\caption{(a) A type $v_j$ represented by its output values, $v_j(\Phi)$ for $\Phi\subseteq\Omega$, is given as input to $h_i^\theta$.  (b) A function that takes $v_j(\Phi^\star)$ and $v_j(\Phi^{\star,i})$ for $j\neq i$ as input can arbitrarily combine $h_i^\WBB$ and $h_i^\IR$.}
\label{long:fig:h-wbb-ir}
\end{figure}


\subsection{Variable Reduction}
\label{long:sec:reduction}

The valuation-vector has exponentially many dimensions and does not fully resolve Challenge \romannum{1}).  We now show, based on game theoretic analysis, that some of the features of the valuation-vector are particularly important to achieve some of the desirable properties.  A low dimensional valuation-vector can then be formed with these important features.


To this end, we have designed the following special Groves mechanisms:
\begin{align}
  h_i^\WBB(v)
  & \equiv \max_{\Phi\subseteq\Omega} \sum_{j\neq i} v_j(\Phi), \forall i\in\calN \label{long:eq:groves:wbb}\\
  h_i^\IR(v)
  & \equiv \max_{\Phi\subseteq\Omega_{-i}} \sum_{j\neq i} v_j(\Phi), \forall i\in\calN, \label{long:eq:groves:ir}
\end{align}
where $\Omega_{-i} \equiv \Omega\setminus(\Omega_{i\to}\cup\Omega_{\to i})$ is the set of trades where player $i$ is neither the seller nor the buyer.  We will utilize these mechanisms, because it can be shown that $h^\WBB$ is always \expost{} WBB, and $h^\IR$ is \expost{} IR when no players reduce the social welfare, where a player reducing the social welfare is defined as follows:
\begin{definition}
  In a trading network $(\calN^+,\Omega,\calV)$, a player $i\in\calN$ with type $v_i\in\calV_i$ is called a negative player if there exists $v_{-i}\in\calV_{-i}$ such that
  \begin{align}
    \max_{\Phi\in\Omega_{-i}} \sum_{j\neq i} v_j(\Phi)
    & > \max_{\Phi\in\Omega} \sum_{j\in\calN} v_j(\Phi).
  \end{align}
\end{definition}
Formally,
\begin{theorem}
  For any trading network $(\calN^+,\Omega,\calV)$, the $h^\WBB$ in \eqref{long:eq:groves:wbb} satisfies DSIC, Efficiency, and \expost{} WBB, and the $h^\IR$ in \eqref{long:eq:groves:ir} satisfies DSIC, Efficiency, and \expost{} IR if $(\calN^+,\Omega,\calV)$ has no negative players.
\end{theorem}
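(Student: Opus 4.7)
The plan is to verify each of the three properties for each mechanism in turn, noting first that both $h^{\WBB}$ and $h^{\IR}$ are bona fide Groves pivot rules, since in each expression the sum $\sum_{j\neq i} v_j(\cdot)$ does not depend on $v_i$, so $h_i^{\WBB}$ and $h_i^{\IR}$ are in fact functions of $v_{-i}$ only. Consequently DSIC and Efficiency are automatic from the Groves template \eqref{long:eq:groves_allocation}--\eqref{long:eq:groves_payment}, and I would simply cite this fact and move on.

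For \textbf{\expost{} WBB of $h^{\WBB}$}, the idea is to show the much stronger statement that every individual payment $\tau'_i(v)$ is nonnegative. Substituting \eqref{long:eq:groves:wbb} into \eqref{long:eq:groves_payment},
\begin{align}
\tau'_i(v) \;=\; \max_{\Phi\subseteq\Omega}\sum_{j\neq i} v_j(\Phi) \;-\; \sum_{j\neq i} v_j(\phi^\star(v)) \;\ge\; 0,
\end{align}
since $\phi^\star(v)$ is one feasible $\Phi$ in the maximum. Summing over $i\in\calN$ gives $u_\IP(v;(\phi^\star,\tau'),\calT^+(v))\ge0$ for all $v$, which is \eqref{long:eq:expostWBB}.

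For \textbf{\expost{} IR of $h^{\IR}$ under the no-negative-player assumption}, I would compute the utility of player $i$ in closed form. Substituting \eqref{long:eq:groves:ir} into \eqref{long:eq:groves_payment} and plugging into the utility expression,
\begin{align}
u_i(v;(\phi^\star,\tau'),\calT^+(v))
&= v_i(\phi^\star(v)) - h_i^{\IR}(v_{-i}) + \sum_{j\neq i} v_j(\phi^\star(v)) \\
&= \max_{\Phi\subseteq\Omega}\sum_{j\in\calN} v_j(\Phi) - \max_{\Phi\subseteq\Omega_{-i}}\sum_{j\neq i} v_j(\Phi),
\end{align}
where the first term used Efficiency of $\phi^\star$. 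This quantity is nonnegative for every $v\in\calV$ precisely when no player is negative, since the negation of the negative-player condition says exactly $\max_{\Phi\subseteq\Omega}\sum_{j\in\calN} v_j(\Phi) \ge \max_{\Phi\subseteq\Omega_{-i}}\sum_{j\neq i} v_j(\Phi)$ for all $i$ and all $v_{-i}$, giving \eqref{long:eq:expostIR}.

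The proof is essentially bookkeeping, so there is no real obstacle; the only subtlety worth flagging is the careful rewriting that turns the IR inequality into exactly the complement of the negative-player condition, and the observation that in $h_i^{\WBB}$ and $h_i^{\IR}$ the argument $v$ may be safely treated as $v_{-i}$ so the Groves structure applies verbatim.
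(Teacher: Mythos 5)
Your proposal is correct and follows essentially the same route as the paper's own proof: DSIC and Efficiency are inherited from the Groves template, \expost{} WBB follows from the pointwise nonnegativity of each $\tau_i^\WBB(v)$ because $\phi^\star(v)$ is a feasible candidate in the maximum defining $h_i^\WBB$, and \expost{} IR follows by rewriting player $i$'s utility as the difference of the two maxima and invoking the negation of the negative-player condition. Your added remark that both pivot rules depend only on $v_{-i}$ (so they are legitimate Groves pivot rules) is a useful clarification the paper leaves implicit, but it does not change the argument.
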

\begin{proof}
  Since DSIC and Efficiency are guaranteed with the Groves mechanism, it suffices to prove \exante{} WBB and \interim{} IR, respectively.  Under $h^\WBB$, the payment from $i$ to IP is
  \begin{align}
    \tau_i^\WBB(v)
    & = h_i^\WBB(v) -  \sum_{j\neq i} v_j(\phi^\star(v)) \\
    & = \max_{\Phi\subseteq\Omega} \sum_{j\neq i} v_j(\Phi) - \sum_{j\neq i} v_j(\phi^\star(v)) \\
    & \ge 0
  \end{align}
  for any $(i,v)\in\calN\times\calV$.  We thus have $\sum_{i\in\calN} \tau_i^\WBB(v)\ge 0$ for any $v\in\calV$, which implies \expost{} WBB \eqref{long:eq:expostWBB}.  Also, when there are no negative players, the utility of player $i$ under $h^\IR$ is
  \begin{align}
    v_i(\phi^\star(v)) - \left(h_i^\IR(v) -  \sum_{j\neq i} v_j(\phi^\star(v))\right)
    & = \sum_{j\in\calN} v_j(\phi^\star(v)) - \max_{\Phi\subseteq\Omega_{-i}} \sum_{j\neq i} v_j(\Phi)\\
    & \ge 0
  \end{align}
  for any $v\in\calV$, which implies \expost{} IR \eqref{long:eq:expostIR}.
\end{proof}

Since $h_i^\WBB$ and $h_i^\IR$, respectively, achieve \expost{} WBB and  \interim{} IR, one might expect that their combination may simultaneously achieve the weaker properties of \exante{} WBB and
\interim{} IR.  Observe that an arbitrary combination of two functions can be represented by a function that takes the input of the two functions.  Since
\begin{align}
  h_i^\WBB(v)=\sum_{j\neq i} v_j(\Phi^\star),
  & \mbox{ where } \Phi^\star\in\argmax_{\Phi\subseteq\Omega} \sum_{j\neq i} v_j(\Phi) \notag\\
  h_i^\IR(v)=\sum_{j\neq i} v_j(\Phi^{\star,i}),
  & \mbox{ where } \Phi^{\star,i}\in\argmax_{\Phi\subseteq\Omega_{-i}} \sum_{j\neq i} v_j(\Phi),\notag
\end{align}
we see that $(v_j(\Phi^\star))_{j\neq i}$ and $(v_j(\Phi^{\star,i}))_{j\neq i}$ are the input of $h_i^\WBB$ and $h_i^\IR$, respectively, and hence are the important features of the valuation-vector.  We may thus seek to find the optimal function within the class of functions that take those important features as input (see Figure~\ref{long:fig:h-wbb-ir}).  We may also consider a larger class of functions by allowing additional (e.g., random) elements of the original valuation-vector as input, trading off the quality of approximation against computational complexity.

\section{Experiments}
\label{long:sec:exp}

We conduct our experiments with trading networks where it is impossible to achieve all of the four desirable properties \emph{\expost{}}.  Specifically, we generate non-trivial instances of trading networks uniformly at random in the setting of Example~\ref{long:ex:two}, which involves a single potential trade between a seller and a buyer\footnote{
To generate non-trivial instances, we first randomly generate 10,000
instances that are not necessarily non-trivial (for each player, pairs
of numbers in [0,1] are generated uniformly at random and let those
values be the cost of the two types; the prior probabilities of the
types are also chosen uniformly at random).  Out of those instances,
1,642 have non-trivial $\calV$ and used in the experiments.
}.
By Theorem~\ref{long:thrm:impos}, there exist no mechanisms that satisfy all of the four desirable properties \emph{\expost{}} for those non-trivial instances.  We apply our proposed AMD approaches to these trading networks and study whether they can find the mechanisms that satisfy the four properties if WBB is \exante{} and IR is \interim{}.

\subsection{Validating Computational Approach}
\label{long:sec:exp:comp}

We first validate our computational approach by applying it to each of the randomly generated instances.  Here, we solve the optimization problem in \eqref{long:eq:lp:obj}-\eqref{long:eq:lp:ir} via CPLEX 22.1.

\begin{figure}[t]
  \begin{minipage}{0.49\linewidth}
    \centering
    \includegraphics[width=\linewidth]{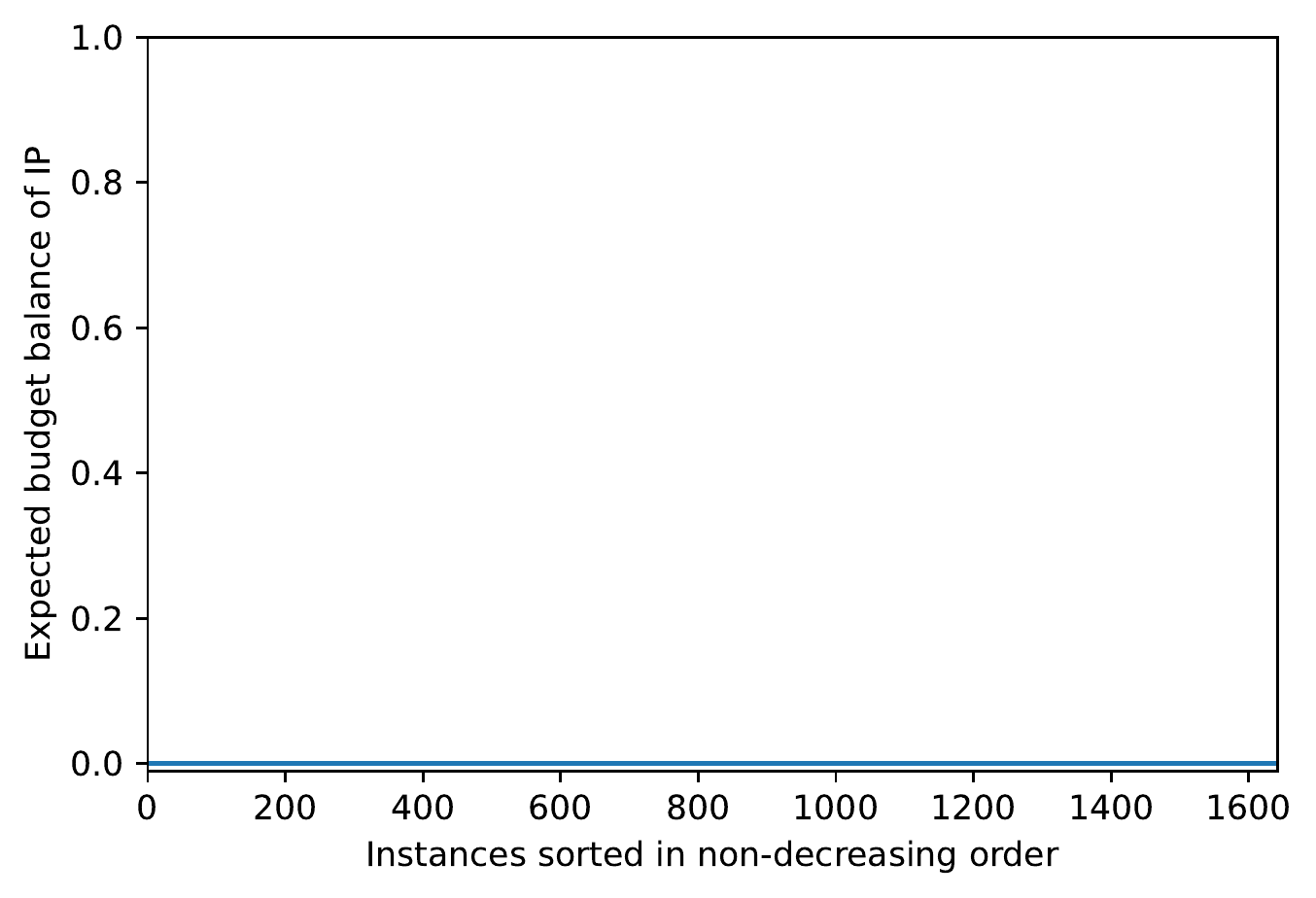}\\
    (a) Budget balance
  \end{minipage}
  \begin{minipage}{0.49\linewidth}
    \centering
    \includegraphics[width=\linewidth]{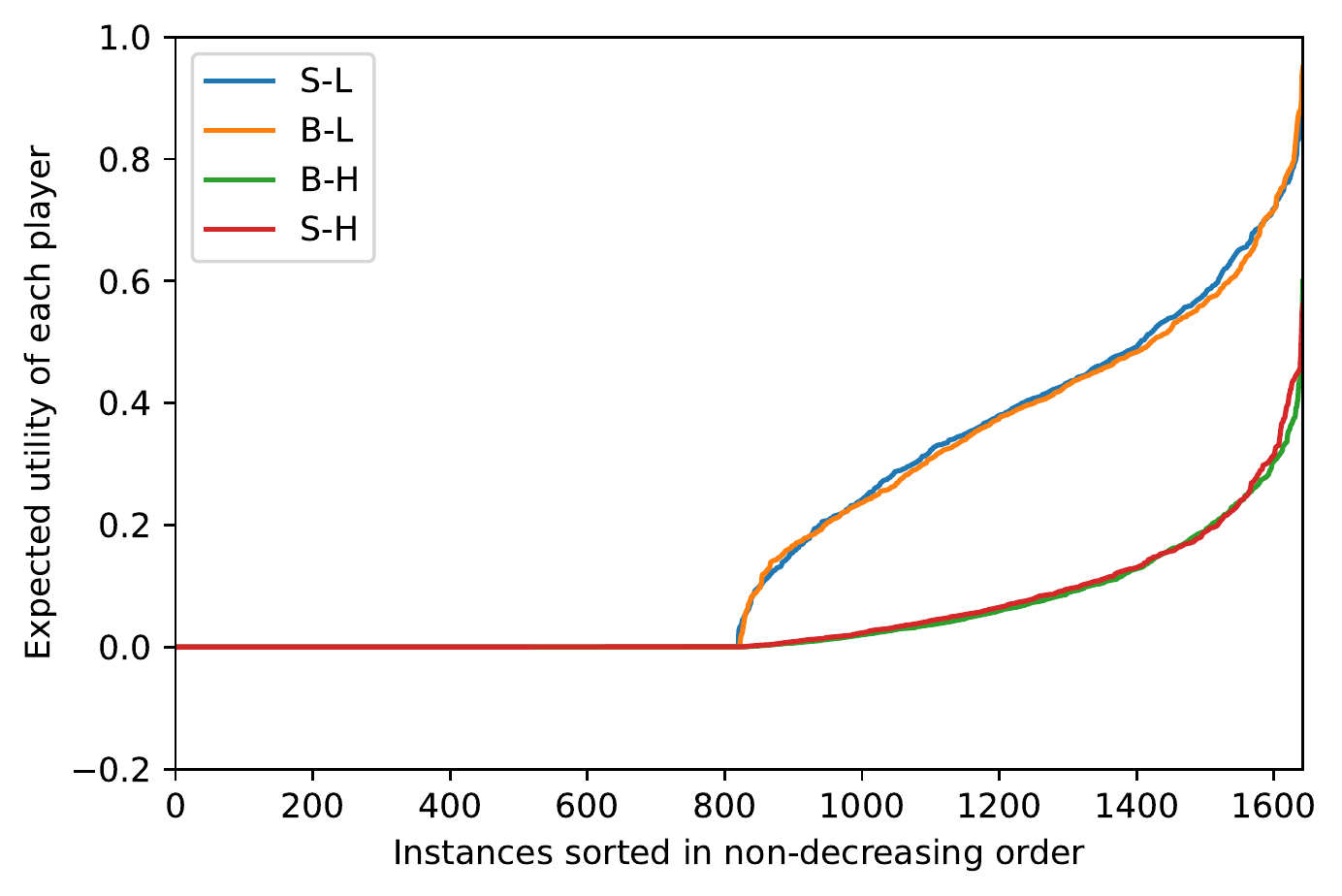}\\
    (b) Utility
  \end{minipage}
  \caption{The performance of the mechanisms computed by the proposed approach, plotted in non-decreasing order.  Panel (a) shows the expected budget balance of IP, and panel (b) shows expected utility of each player (S-L denotes the seller of low cost, and B-H denotes the buyer of high costs; S-H and B-L are defined analogously).}
  \label{long:fig:comp}
\end{figure}

Figure~\ref{long:fig:comp} shows the expected budget balance (utility)
of IP in (a) and the expected utility of each player in (b) for the
1,642 instances plotted in non-decreasing order.  Observe that, for
all instances, IP expects essentially zero budget balance, and
every player (supplier or buyer of any type) expects non-negative utility.  Therefore,
the proposed approach finds the mechanism that achieves both of
\exante{} WBB\footnote{In fact, strong budge balance is achieved.}  and
\interim{} IR (in addition to DSIC and Efficiency) for every randomly
generated instance where achieving the four properties \emph{\expost{}} is impossible. 


This however does not mean that one can \emph{always} achieve \exante{} WBB and
\interim{} IR.  There indeed exist instances for which \exante{} WBB or
\interim{} IR cannot be achieved.  Empirically, however, such infeasible
instances appear to have zero Lebesgue measure and are not generated
in our random process.  A caveat is that, when an instance is close to
an infeasible instance, our approach tends to find the
mechanism that requires a large amount of payment to or from IP.  The
amount of payment occasionally exceeds the cost of each player by more
than 100 times, which is unlikely to be acceptable in practice even if
the expected utility is non-negative.  See the expected payment of each
player to/from IP in Figure~\ref{long:fig:comp-payment}.

\begin{figure}[t]
  \centering
  \begin{minipage}{0.49\linewidth}
    \centering
    \includegraphics[width=\linewidth]{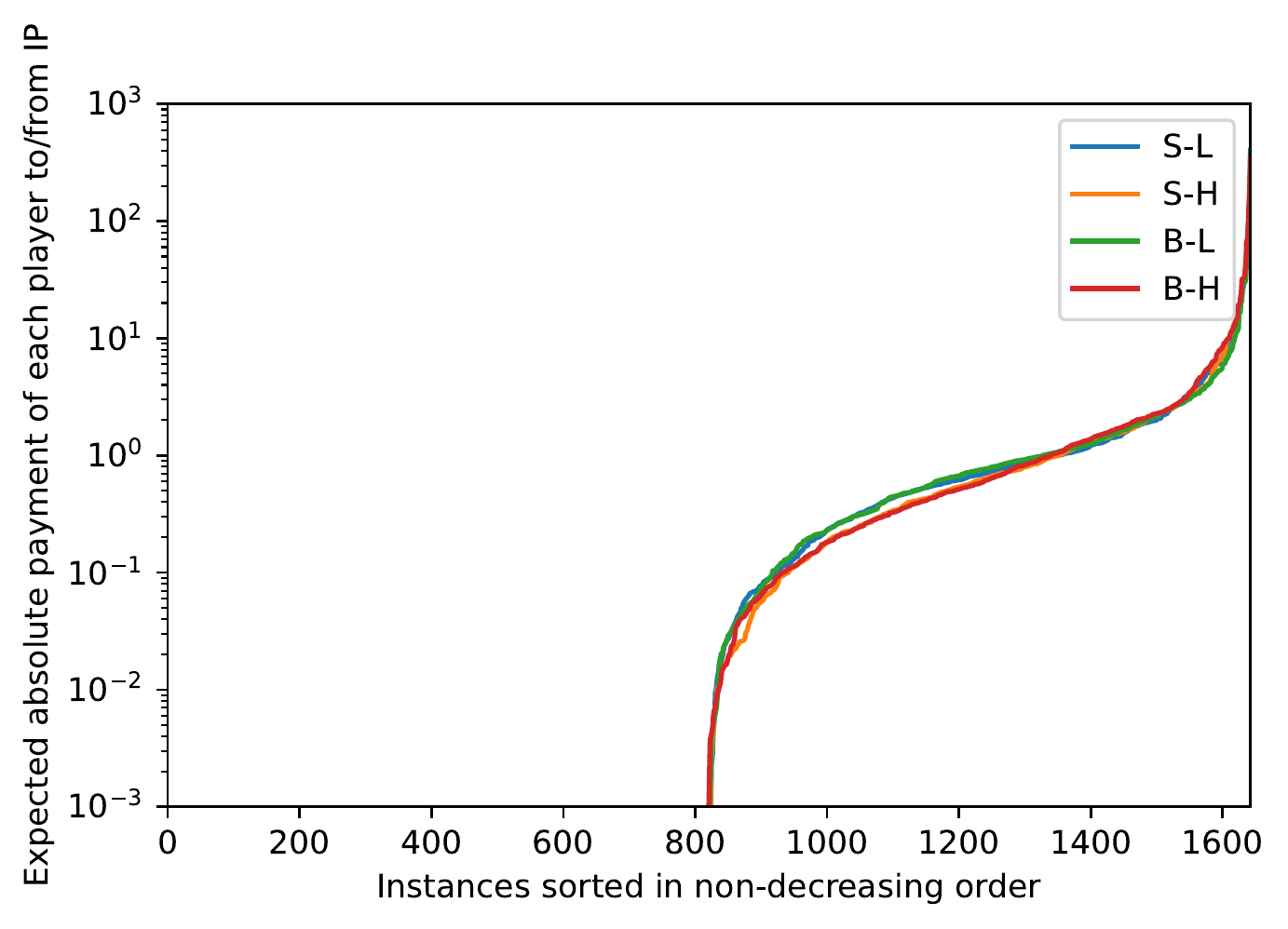}
  \end{minipage}
  \caption{The expected absolute payment of each player to/from IP, plotted in non-decreasing order.}
  \label{long:fig:comp-payment}
\end{figure}

\subsection{Validating Learning Approach}

Next, we validate the effectiveness of our learning approach with a focus on the impact of sample approximation.  Here, for each of the 1,642 instances, we take varying number of samples from the prior distribution of types to solve the learning problem \eqref{long:eq:learn:obj}-\eqref{long:eq:learn:ir}.  Here, we use the lower confidence bound of a Gaussian process regressor to evaluate the right-hand side of \eqref{long:eq:learn:ir} and added the sample standard deviation to the right-hand side of \eqref{long:eq:learn:wbb} to form an upper confidence bound.  We do not rely on functional approximation $h^\theta$ and seek to find an exact $h$.

\begin{figure}[t]
  \centering
  \begin{minipage}{0.49\linewidth}
    \centering
    \includegraphics[width=\linewidth]{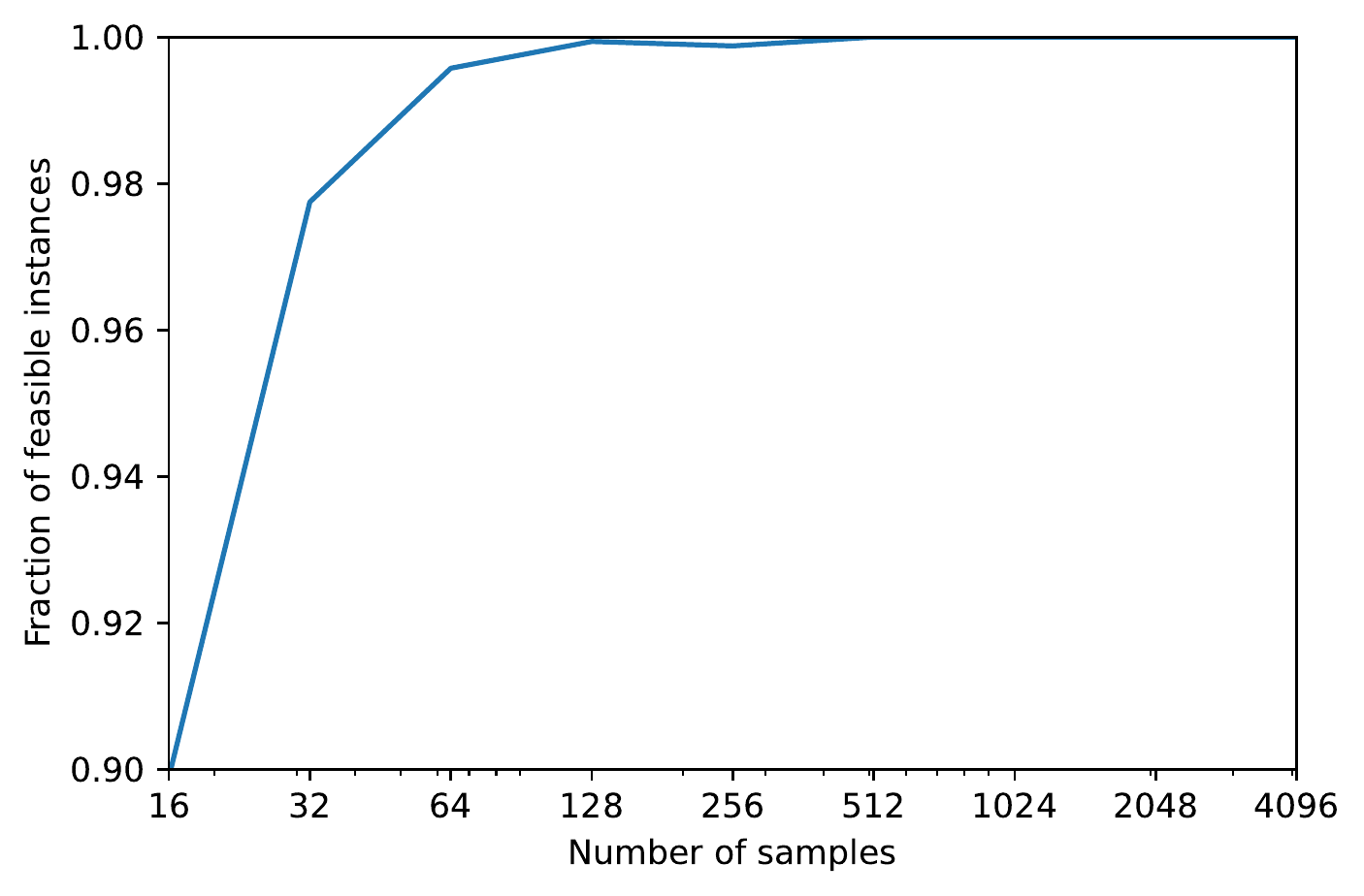}\\
    (a) Feasibility 
  \end{minipage}
  \begin{minipage}{0.49\linewidth}
    \centering
    \includegraphics[width=\linewidth]{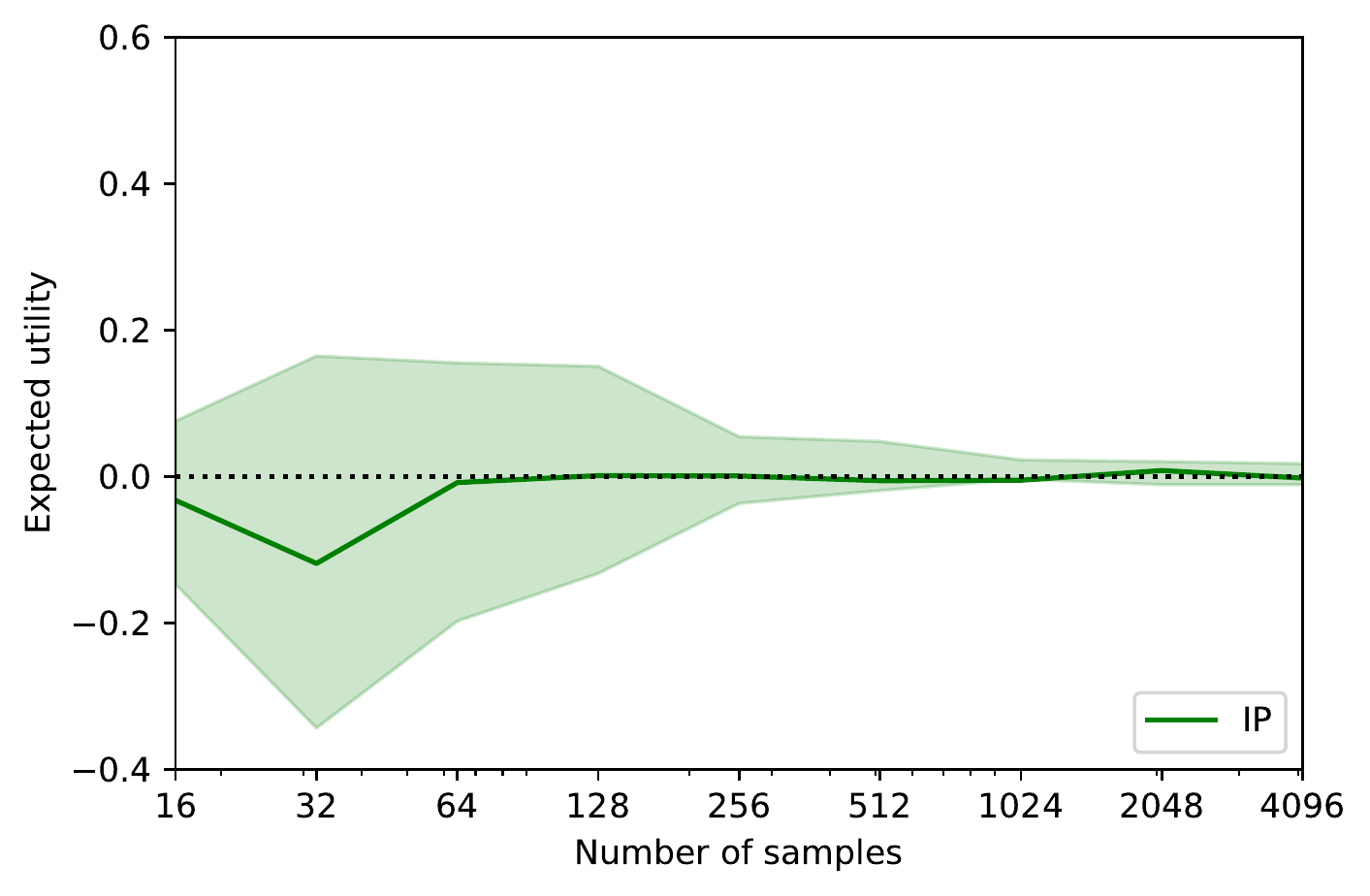}\\
    (b) Budget balance
  \end{minipage}
  \begin{minipage}{0.49\linewidth}
    \centering
    \includegraphics[width=\linewidth]{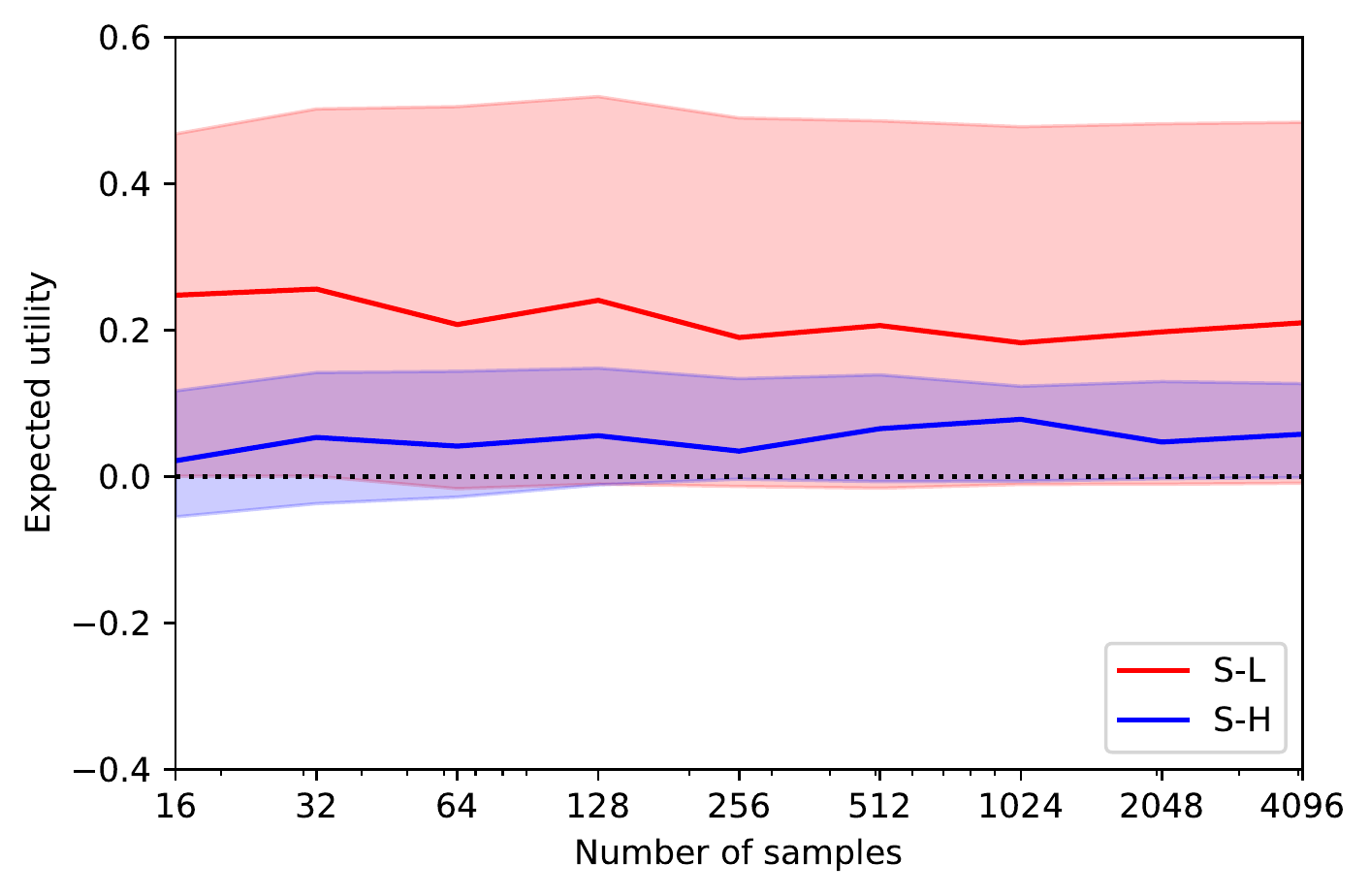}\\
    (c) Supplier utility
  \end{minipage}
  \begin{minipage}{0.49\linewidth}
    \centering
    \includegraphics[width=\linewidth]{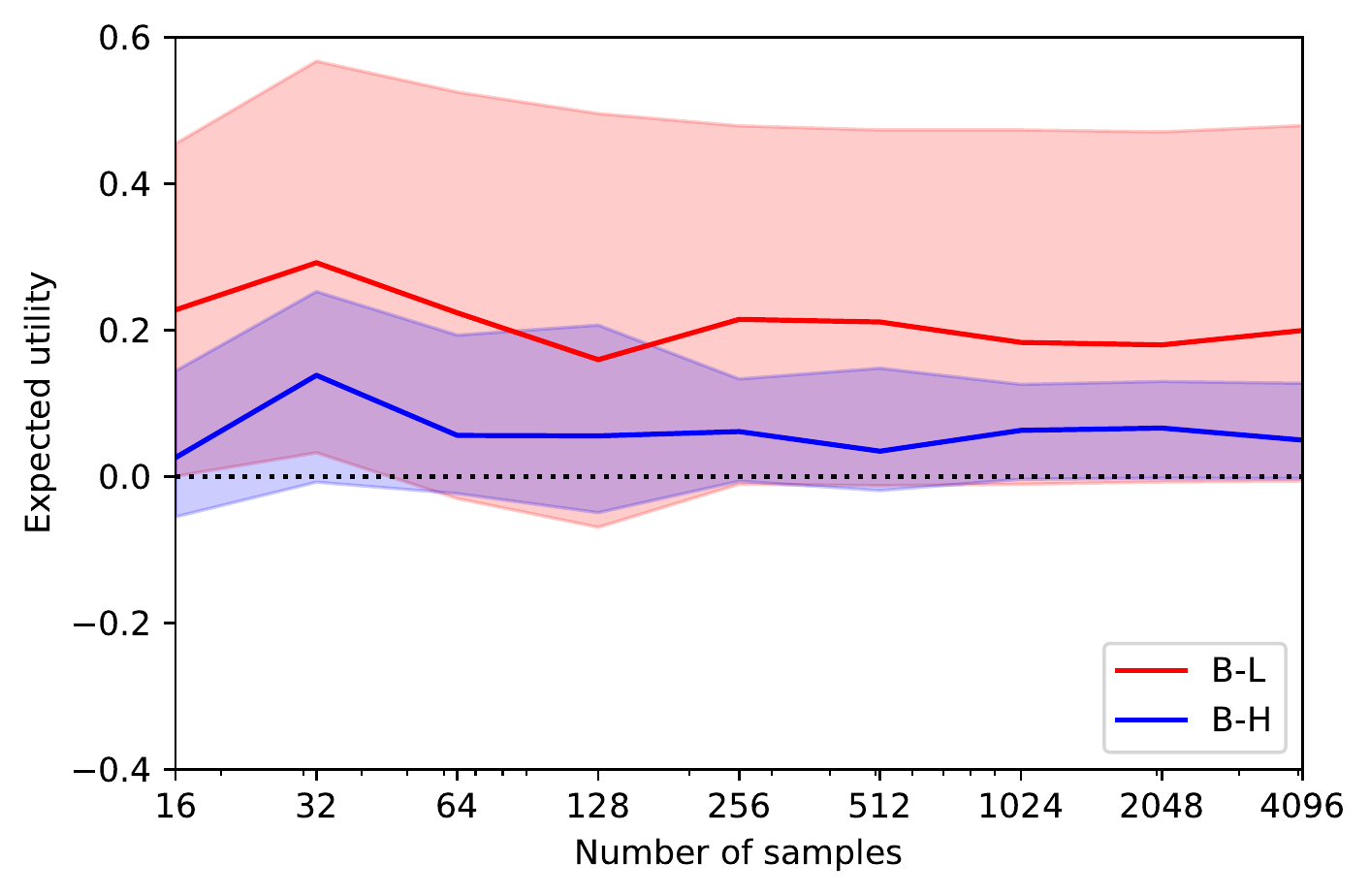}\\
    (d) Buyer utility
  \end{minipage}
  \caption{The performance of the mechanisms learned by our
    approach against the number of the samples from the prior
    distribution.  Panel (a) shows the fraction of the instances for
    which the learning problem has feasible solutions.  For those
    feasible instances, panel (b) shows the expected budget balance of
    IP, and panel (c)-(d) shows expected utility of each player.
    The solid line is the average expected utility, and the shaded area is the [15.9\%, 84.1\%]-quantile.}
  \label{long:fig:learn}
\end{figure}

Figure~\ref{long:fig:learn} shows (a) the fraction of the instances for which the learning problem has feasible solutions as well as the expected utility of (b) IP, (c) the supplier, and (d) the buyer under the mechanisms learned with our approach.  In (b)-(d), the solid lines show the average over the feasible instances, and the shared area represents the [15.9\%, 84.1\%]-quantile, which coincide with the confidence intervals set in \eqref{long:eq:learn:wbb}-\eqref{long:eq:learn:ir}.

We find that the learning problems do not necessarily have feasible
solutions (while the corresponding optimization problems are all
feasible), but with sufficient amount of data ($\ge 512$ in this
case), all of the learning problems become feasible.  For those
feasible learning problems, the proposed approach learns the
mechanisms where the supplier and the buyer has strictly positive
expected utility (achieving \interim{} IR), and the IP has approximately
zero expected utility (achieving \exante{} WBB).  Also, \exante{} WBB and
\interim{} IR are achieved with high probability, but they are sometimes
violated unlike the mechanisms computed with the exact knowledge of
the prior distribution.  As the sample size increases, however,
\exante{} WBB and \interim{} IR get more frequently satisfied with the
mechanisms found with learning.

\subsection{Validating Variable Reduction}
\label{long:sec:exp:reduce}


Finally, we study the effectiveness of variable reduction by applying it to the settings of the previous experiments.
Recall that the number of
variables can be reduced from $2^{|\Omega|} \, |\calN| \, (|\calN| -
1)$ to $2 \, |\calN| \, (|\calN| - 1)$.  While this may not appear to
reduce the number of variables when $|\Omega|=1$ and $|\calN|=2$, it
reduces the essential number of variables even in this case.
Specifically, for the trading network in Example~\ref{long:ex:two}, we have
\begin{align}
  \argmax_{\Phi\subseteq\Omega} v_\rmS(\Phi)
  = \argmax_{\Phi\subseteq\Omega_{-\rmB}} v_\rmS(\Phi)
  = \{\emptyset\},
\end{align}
so that both $h_\rmB^\IR$ and $h_\rmB^\WBB$ are functions $v_\rmS(\emptyset)$.


Figure~\ref{long:fig:comp_reduced} shows the performance of the mechanisms
computed by the proposed approach with the technique of variable
reduction.  Compared to the corresponding results in
Figure~\ref{long:fig:comp} and Figure~\ref{long:fig:comp-payment} without variable reduction,
we find different mechanisms depending on whether the variables are reduced or not, as is suggested by the different expected utilities of the players.  However, the mechanisms computed with reduced variables still achieve \exante{} WBB and \interim{} IR for all of the randomly generated instances.

\begin{figure}[t]
  \begin{minipage}{0.33\linewidth}
    \centering
    \includegraphics[width=\linewidth]{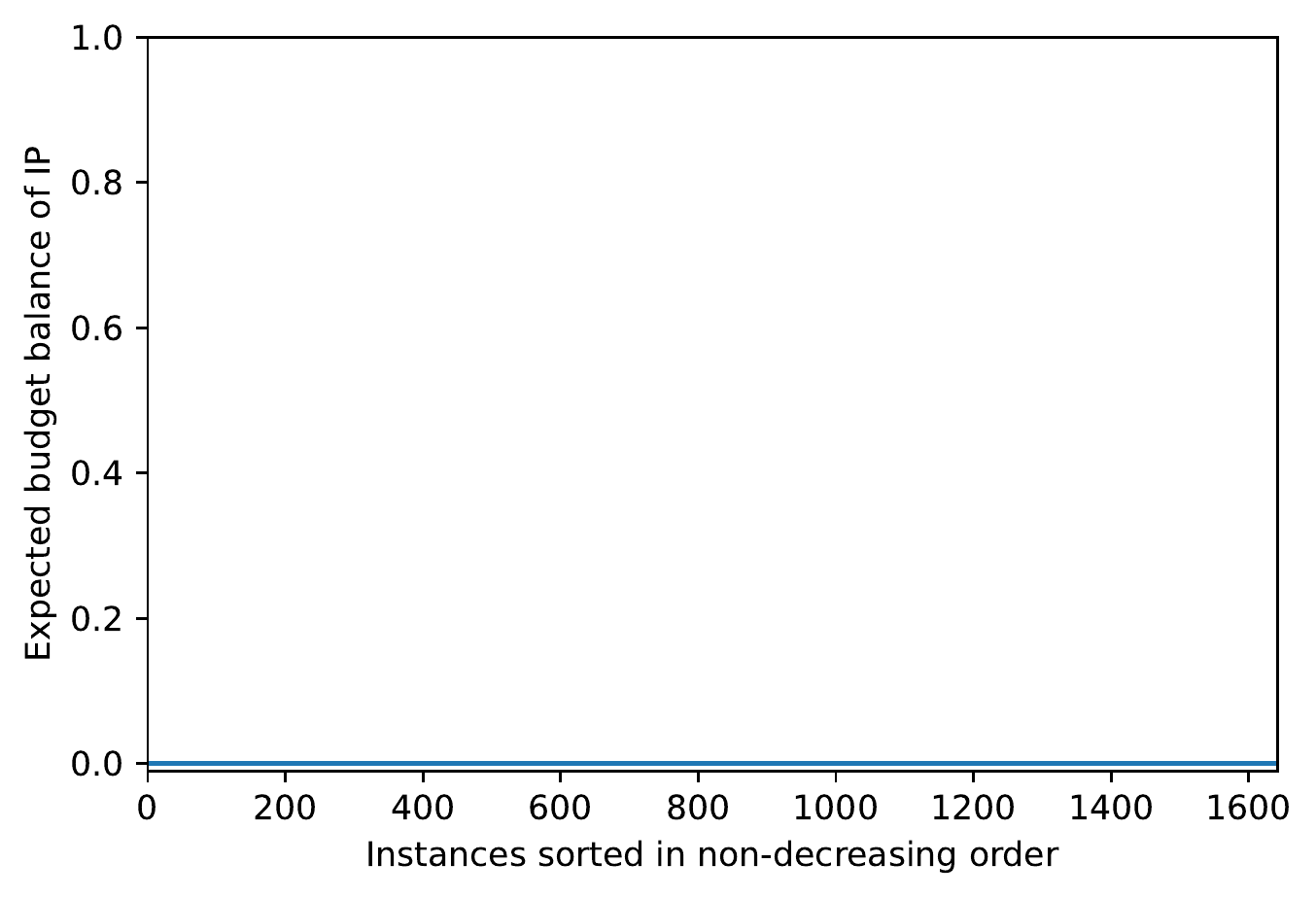}\\
    (a) Budget balance
  \end{minipage}
  \begin{minipage}{0.33\linewidth}
    \centering
    \includegraphics[width=\linewidth]{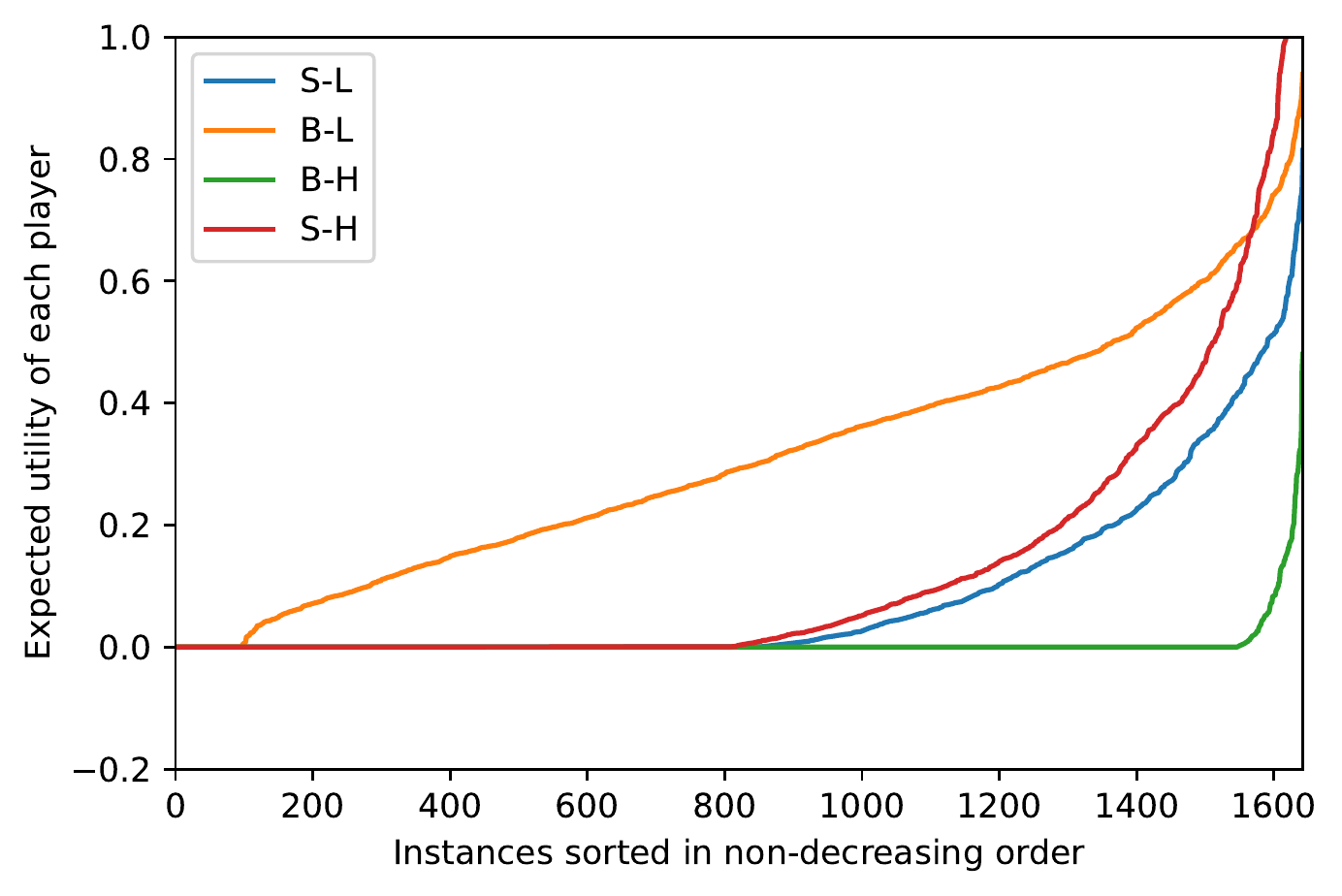}\\
    (b) Utility
  \end{minipage}
  \begin{minipage}{0.33\linewidth}
    \centering
    \includegraphics[width=\linewidth]{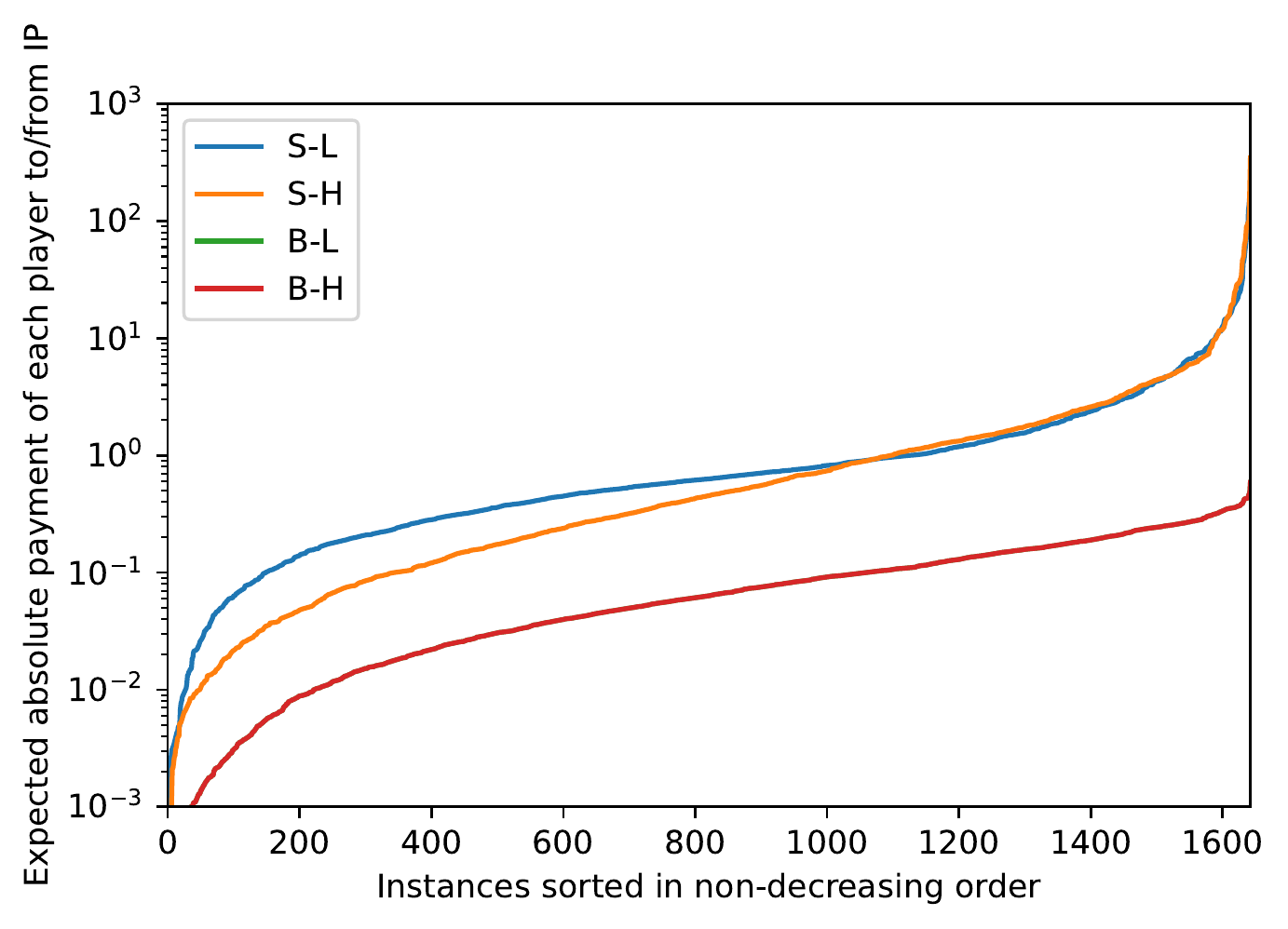}\\
    (c) Payment to/from IP
  \end{minipage}
  \caption{The performance of the mechanisms computed by the proposed approach with reduced variables.}
  \label{long:fig:comp_reduced}
\end{figure}

Figure~\ref{long:fig:learn_reduced} shows the performance of the mechanisms \emph{learned} with variance reduction.
Similar to the corresponding results without variable reduction 
(Figure~\ref{long:fig:learn}), \exante{} WBB and \interim{} IR are achieved with
high probability when the learning problems are feasible.
Variable reduction, however, reduces the solution space, and the instances that are feasible without variable reduction can become infeasible.  Nevertheless, the fraction of feasible instances increases with the sample size (and all of the instances become feasible with the exact prior distribution, as we have shown with the computational approach).

\begin{figure}[t]
  \begin{minipage}{0.49\linewidth}
    \centering
    \includegraphics[width=\linewidth]{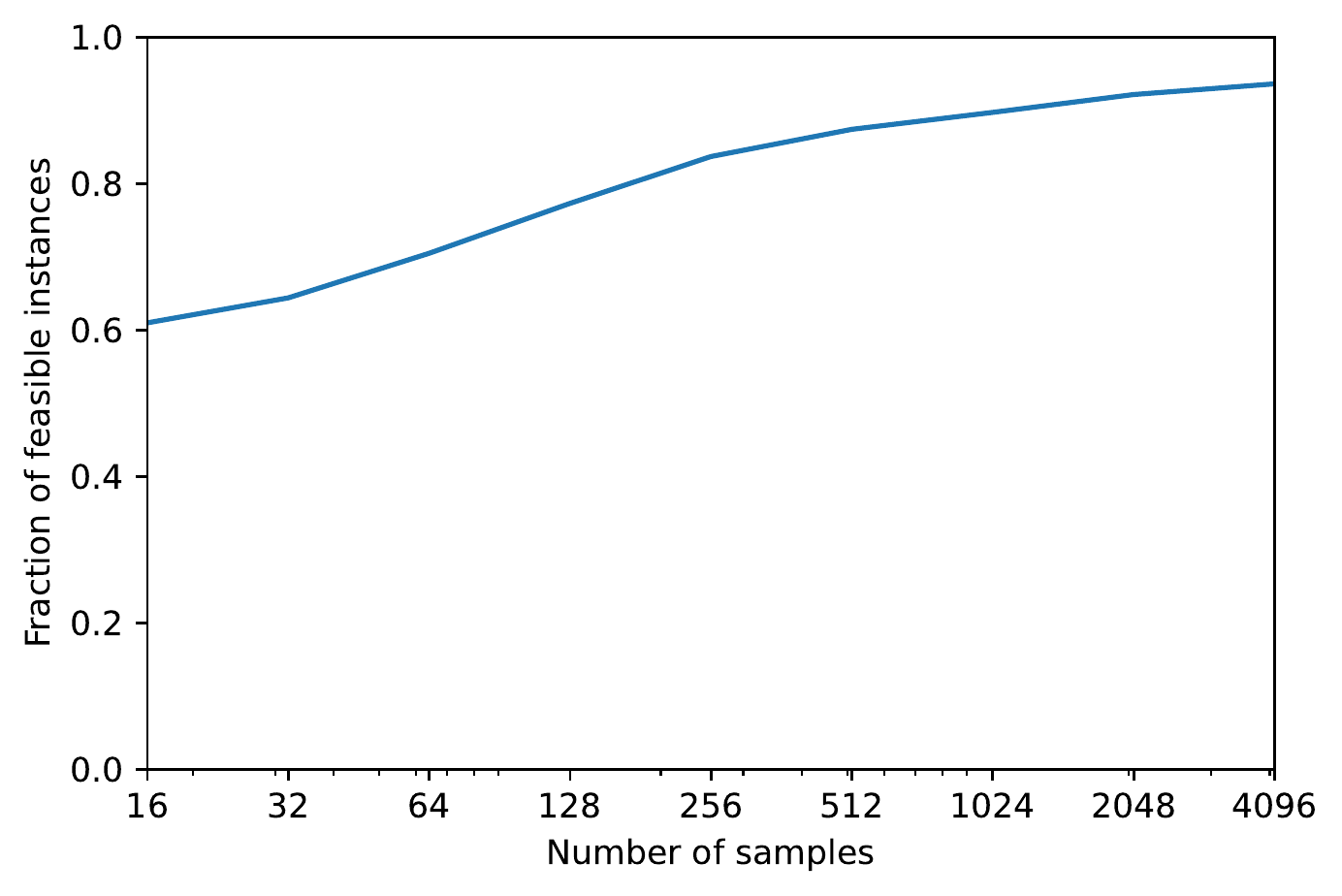}\\
    (a) Feasibility 
  \end{minipage}
  \begin{minipage}{0.49\linewidth}
    \centering
    \includegraphics[width=\linewidth]{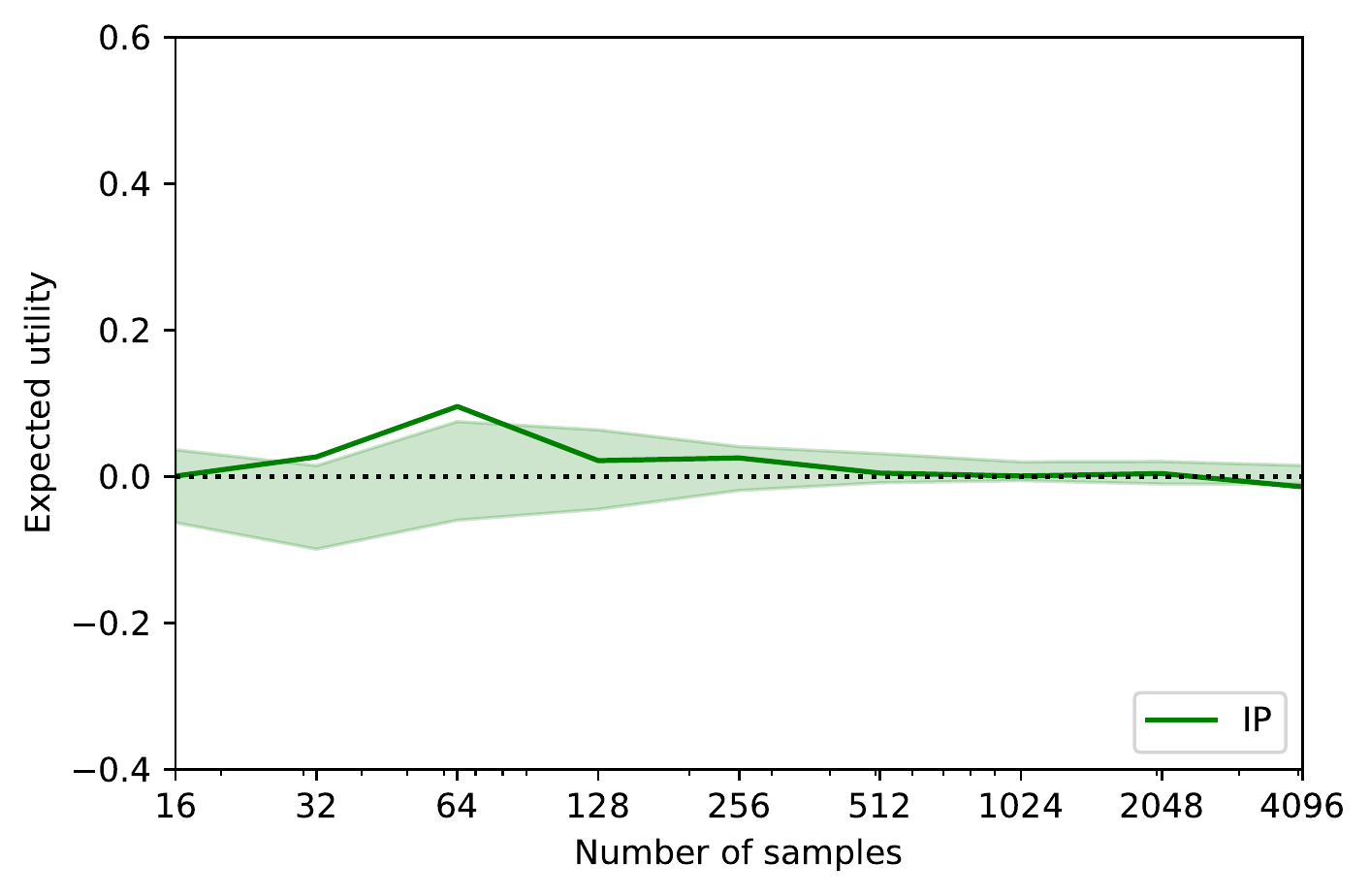}\\
    (b) Budget balance
  \end{minipage}
  \begin{minipage}{0.49\linewidth}
    \centering
    \includegraphics[width=\linewidth]{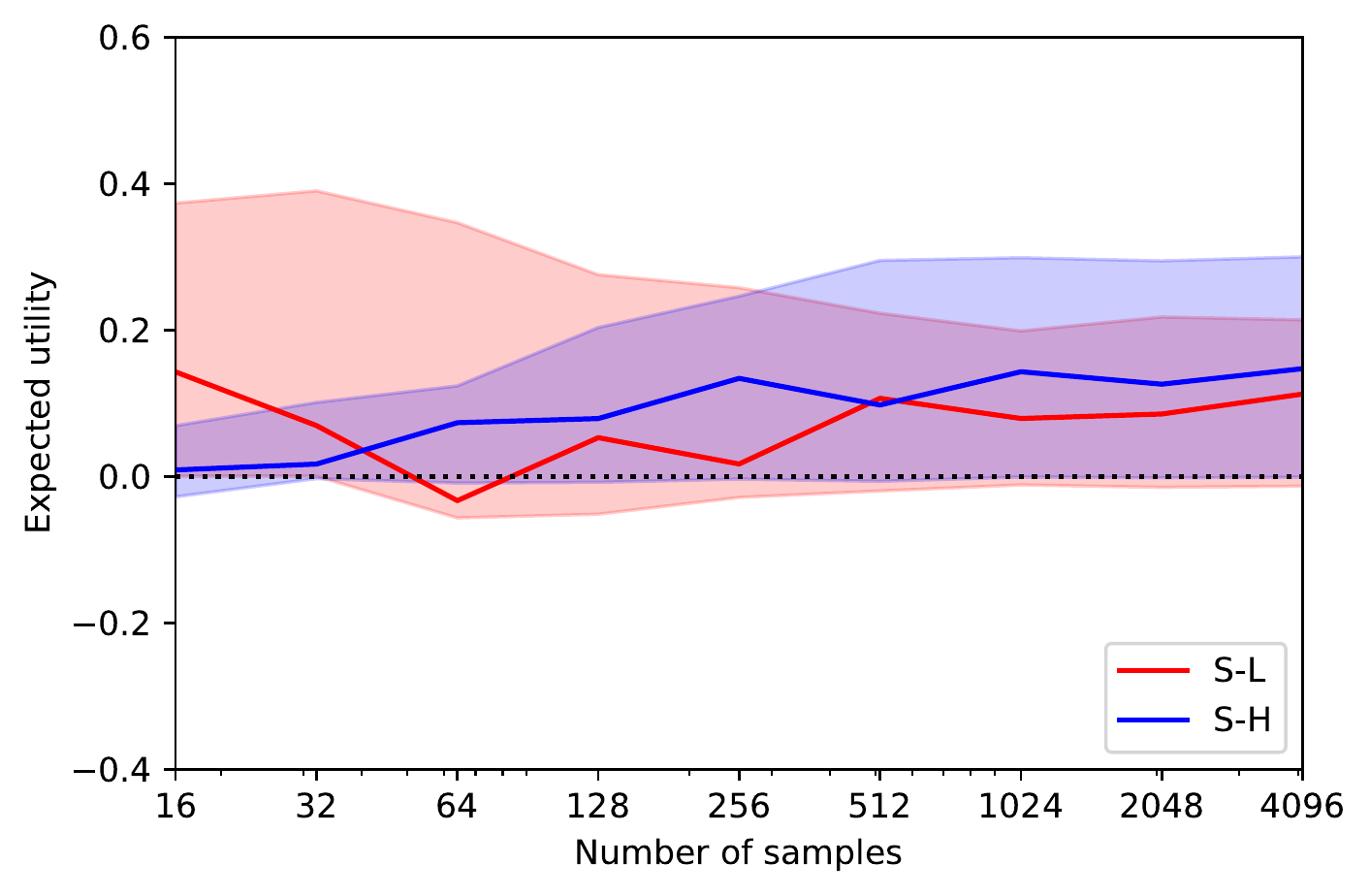}\\
    (c) Supplier utility
  \end{minipage}
  \begin{minipage}{0.49\linewidth}
    \centering
    \includegraphics[width=\linewidth]{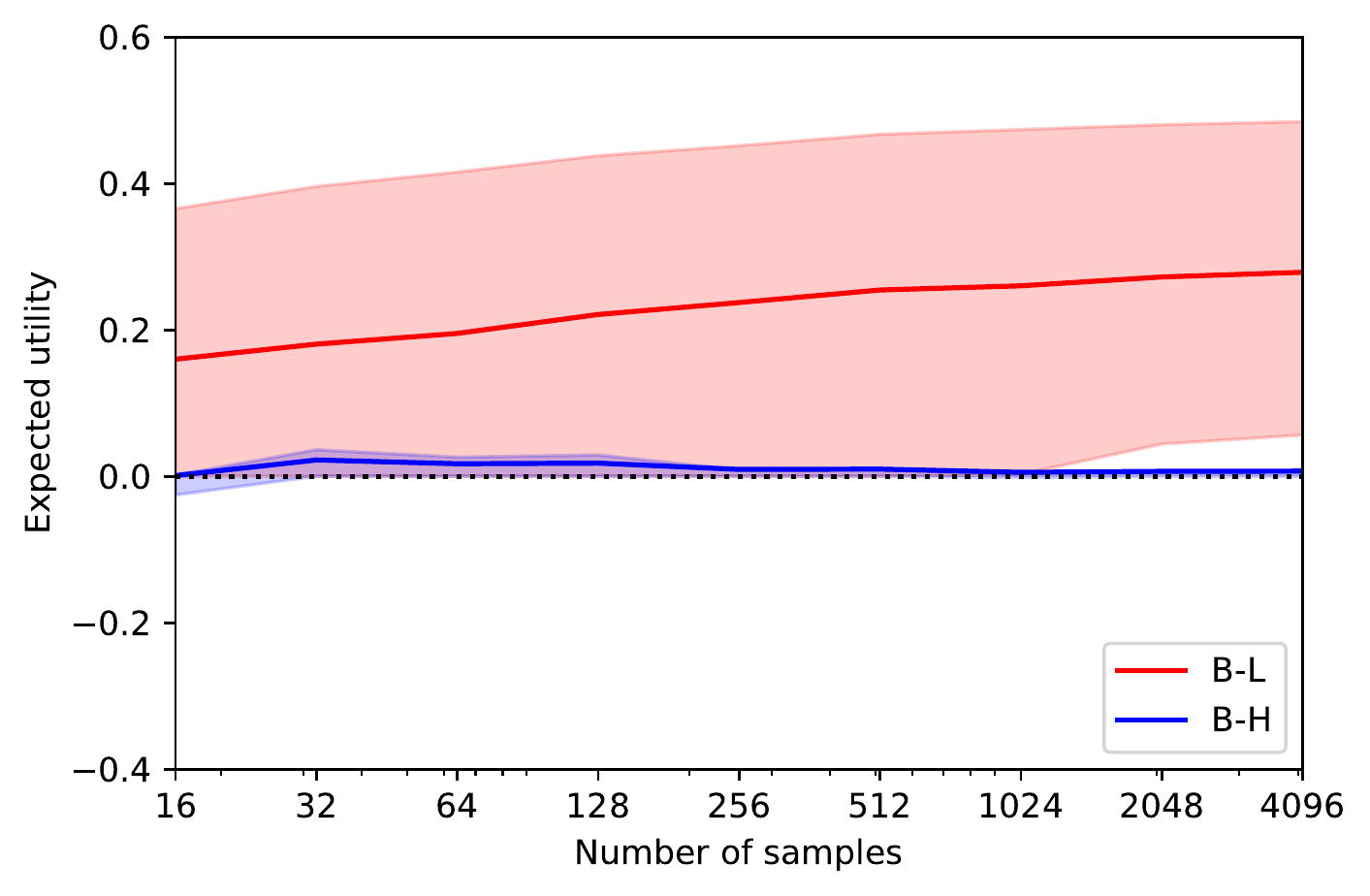}\\
    (d) Buyer utility
  \end{minipage}
  \caption{The performance of the mechanisms learned by the proposed
    approach with reduced variables.}
  \label{long:fig:learn_reduced}
\end{figure}

\section{Conclusion}
\label{long:sec:conclusion}

We have shown that, by computing or learning appropriate mechanisms, it can be made possible to achieve DSIC, Efficiency, \exante{} WBB, and \interim{} IR in trading networks where achieving these four properties \expost{} is impossible.  Our AMD approach is greatly simplified with Theorem~\ref{long:thrm:nopayment}, which allows us to ignore the payment between players, and the Groves mechanisms, which allows us to focus on achieving WBB and IR.

This paper proposes the first AMD approach for trading networks and suggests interesting directions of future work.  For example, while we learn mechanisms from previously collected (offline) data, the prior work has also investigated the approaches of learning mechanisms while collecting data in an online manner \cite{pmlr-v32-mohri14,CRJ08,balaguer2022hcmdzero,Liu_Chen_Qin_2015} or by reinforcement learning \cite{ijcai2017-RLAMDAdAuctions,IndirectMechanismDesignThesis,balaguer2022the,AIEconomist,abs-1806-04067,DemocraticAI,Brero_Eden_Gerstgrasser_Parkes_Rheingans-Yoo_2021,Bre21}.  Such online methods involves the additional challenge of the tradeoff between exploration and exploitation, and it is an interesting direction of future work to study such online methods for trading networks.
\cite{learningauctions-morgenstern16,MultiItemRevenueMaximizingFOCS2017,RevenueMaximizingAMDFocs2018,1530752} study sample complexity of learning mechanisms in combinatorial auctions, but such sample complexity for trading networks is unknown to date.

\SWdone{We may want to mention the following as future work:
\begin{itemize}
    \item Ex-post WBB.
    \item Theoretical bounds using Rademard complexity or pseudo dimensions.
    \item Learning indirect mechanisms.
    \item VCG redistribution to get closer to strong budget balance/stable outcome/competitive equilibrium
\end{itemize}
}

\bibliographystyle{aaai23}
\bibliography{game}

\begin{thebibliography}{39}
\providecommand{\natexlab}[1]{#1}

\bibitem[{Alon et~al.(2021)Alon, Lavi, Shamash, and
  Talgam-Cohen}]{alon2021incomplete}
Alon, T.; Lavi, R.; Shamash, E.~S.; and Talgam-Cohen, I. 2021.
\newblock Incomplete Information {VCG} Contracts for Common Agency.
\newblock \emph{arXiv preprint arXiv:2105.14998}.

\bibitem[{Balaguer et~al.(2022{\natexlab{a}})Balaguer, Koster, Summerfield, and
  Tacchetti}]{balaguer2022the}
Balaguer, J.; Koster, R.; Summerfield, C.; and Tacchetti, A.
  2022{\natexlab{a}}.
\newblock The Good Shepherd: {A}n Oracle Agent for Mechanism Design.
\newblock In \emph{ICLR 2022 Workshop on Gamification and Multiagent
  Solutions}.

\bibitem[{Balaguer et~al.(2022{\natexlab{b}})Balaguer, Koster, Weinstein,
  Campbell-Gillingham, Summerfield, Botvinick, and
  Tacchetti}]{balaguer2022hcmdzero}
Balaguer, J.; Koster, R.; Weinstein, A.; Campbell-Gillingham, L.; Summerfield,
  C.; Botvinick, M.; and Tacchetti, A. 2022{\natexlab{b}}.
\newblock {HCMD}-zero: {L}earning Value Aligned Mechanisms from Data.
\newblock In \emph{ICLR 2022 Workshop on Gamification and Multiagent
  Solutions}.

\bibitem[{Balcan et~al.(2005)Balcan, Blum, Hartline, and Mansour}]{1530752}
Balcan, M.-F.; Blum, A.; Hartline, J.; and Mansour, Y. 2005.
\newblock Mechanism Design via Machine Learning.
\newblock In \emph{Proceedings of the 46th Annual IEEE Symposium on Foundations
  of Computer Science (FOCS'05)}, 605--614.

\bibitem[{Baumann, Graepel, and Shawe{-}Taylor(2018)}]{abs-1806-04067}
Baumann, T.; Graepel, T.; and Shawe{-}Taylor, J. 2018.
\newblock Adaptive Mechanism Design: {L}earning to Promote Cooperation.
\newblock \emph{CoRR}, abs/1806.04067.

\bibitem[{Brero et~al.(2021{\natexlab{a}})Brero, Chakrabarti, Eden,
  Gerstgrasser, Li, and Parkes}]{Bre21}
Brero, G.; Chakrabarti, D.; Eden, A.; Gerstgrasser, M.; Li, V.; and Parkes,
  D.~C. 2021{\natexlab{a}}.
\newblock Learning Stackelberg Equilibria in Sequential Price Mechanisms.
\newblock In \emph{Workshop for Reinforcement Learning Theory at ICML 2021}.

\bibitem[{Brero et~al.(2021{\natexlab{b}})Brero, Eden, Gerstgrasser, Parkes,
  and Rheingans-Yoo}]{Brero_Eden_Gerstgrasser_Parkes_Rheingans-Yoo_2021}
Brero, G.; Eden, A.; Gerstgrasser, M.; Parkes, D.; and Rheingans-Yoo, D.
  2021{\natexlab{b}}.
\newblock Reinforcement Learning of Sequential Price Mechanisms.
\newblock In \emph{Proceedings of the 35th AAAI Conference on Artificial
  Intelligence}, volume~35, 5219--5227.

\bibitem[{Cai and Daskalakis(2017)}]{MultiItemRevenueMaximizingFOCS2017}
Cai, Y.; and Daskalakis, C. 2017.
\newblock Learning Multi-Item Auctions with (or without) Samples.
\newblock In \emph{Proceedings of the 58th Annual Symposium on Foundations of
  Computer Science (FOCS)}, 516--527.

\bibitem[{Candogan, Epitropou, and
  Vohra(2021)}]{CandoganFullVersionEquilibrium}
Candogan, O.; Epitropou, M.; and Vohra, R.~V. 2021.
\newblock Competitive Equilibrium and Trading Networks: {A} Network Flow
  Approach.
\newblock \emph{Operations Research}, 69(1): 114--147.

\bibitem[{Chapman, Rogers, and Jennings(2008)}]{CRJ08}
Chapman, A.~C.; Rogers, A.; and Jennings, N.~R. 2008.
\newblock Learn While You Earn: {T}wo Approaches to Learning Auction Parameters
  in Take-it-or-leave-it Auctions.
\newblock In \emph{Proceedings of the 7th International Joint Conference on
  Autonomous Agents and Multiagent Aystems}, volume~3, 1561–1564.

\bibitem[{Conitzer and Sandholm(2002)}]{AMD}
Conitzer, V.; and Sandholm, T. 2002.
\newblock Complexity of Mechanism Design.
\newblock In \emph{Proceedings of the 18th Conference on Uncertainty in
  Artificial Intelligence}, 103–110.

\bibitem[{Duetting et~al.(2019)Duetting, Feng, Narasimhan, Parkes, and
  Ravindranath}]{pmlr-v97-duetting19a}
Duetting, P.; Feng, Z.; Narasimhan, H.; Parkes, D.; and Ravindranath, S.~S.
  2019.
\newblock Optimal Auctions through Deep Learning.
\newblock In Chaudhuri, K.; and Salakhutdinov, R., eds., \emph{Proceedings of
  the 36th International Conference on Machine Learning}, volume~97 of
  \emph{Proceedings of Machine Learning Research}, 1706--1715. PMLR.

\bibitem[{Faccio, Kirsch, and Schmidhuber(2021)}]{faccio2021parameterbased}
Faccio, F.; Kirsch, L.; and Schmidhuber, J. 2021.
\newblock Parameter-Based Value Functions.
\newblock In \emph{International Conference on Learning Representations}.

\bibitem[{Gonczarowski and Weinberg(2018)}]{RevenueMaximizingAMDFocs2018}
Gonczarowski, Y.~A.; and Weinberg, S.~M. 2018.
\newblock The Sample Complexity of Up-to-epsilon Multi-Dimensional Revenue
  Maximization.
\newblock In \emph{Proceedings of the 59th Annual Symposium on Foundations of
  Computer Science (FOCS)}, 416--426.

\bibitem[{Harb et~al.(2020)Harb, Schaul, Precup, and Bacon}]{arxiv.2002.11833}
Harb, J.; Schaul, T.; Precup, D.; and Bacon, P. 2020.
\newblock Policy Evaluation Networks.
\newblock \emph{CoRR}, abs/2002.11833.

\bibitem[{Hatfield and Kominers(2012)}]{hatfield2012matching}
Hatfield, J.~W.; and Kominers, S.~D. 2012.
\newblock Matching in Networks with Bilateral Contracts.
\newblock \emph{American Economic Journal: Microeconomics}, 4(1): 176--208.

\bibitem[{Hatfield et~al.(2013)Hatfield, Kominers, Nichifor, Ostrovsky, and
  Westkamp}]{hatfield2013stability}
Hatfield, J.~W.; Kominers, S.~D.; Nichifor, A.; Ostrovsky, M.; and Westkamp, A.
  2013.
\newblock Stability and Competitive Equilibrium in Trading Networks.
\newblock \emph{Journal of Political Economy}, 121(5): 966--1005.

\bibitem[{Hatfield et~al.(2015)Hatfield, Kominers, Nichifor, Ostrovsky, and
  Westkamp}]{hatfield2015full}
Hatfield, J.~W.; Kominers, S.~D.; Nichifor, A.; Ostrovsky, M.; and Westkamp, A.
  2015.
\newblock Full Substitutability in Trading Networks.
\newblock In \emph{Proceedings of the 16th ACM Conference on Economics and
  Computation}, 39--40.

\bibitem[{Hatfield et~al.(2019)Hatfield, Kominers, Nichifor, Ostrovsky, and
  Westkamp}]{FullSubstitutability}
Hatfield, J.~W.; Kominers, S.~D.; Nichifor, A.; Ostrovsky, M.; and Westkamp, A.
  2019.
\newblock Full Substitutability.
\newblock \emph{Theoretical Economics}, 14(4): 1535--1590.

\bibitem[{Iwata, Moriguchi, and Murota(2005)}]{IwataSubmodularFlow2005}
Iwata, S.; Moriguchi, S.; and Murota, K. 2005.
\newblock A Capacity Scaling Algorithm for M-Convex Submodular Flow.
\newblock \emph{Mathematical Programming}, 103(1): 181–202.

\bibitem[{Koster et~al.(2022)Koster, Jan, Tacchetti, Weinstein, Zhu, Hauser,
  Williams, Campbell-Gillingham, Thacker, Botvinick, and
  Summerfield}]{DemocraticAI}
Koster, R.; Jan, B.; Tacchetti, A.; Weinstein, A.; Zhu, T.; Hauser, O.;
  Williams, D.; Campbell-Gillingham, L.; Thacker, P.; Botvinick, M.; and
  Summerfield, C. 2022.
\newblock Human-centred Mechanism Design with Democratic {AI}.
\newblock \emph{Nature Human Behaviour}.

\bibitem[{Likhodedov and Sandholm(2004)}]{AMDRevenueMaximizingAAAI2004}
Likhodedov, A.; and Sandholm, T. 2004.
\newblock Methods for Boosting Revenue in Combinatorial Auctions.
\newblock In \emph{Proceedings of the 19th National Conference on Artifical
  Intelligence}, AAAI'04, 232–237. AAAI Press.
\newblock ISBN 0262511835.

\bibitem[{Likhodedov and Sandholm(2005)}]{AMDRevenueMaximizing2005}
Likhodedov, A.; and Sandholm, T. 2005.
\newblock Approximating Revenue-Maximizing Combinatorial Auctions.
\newblock In \emph{Proceedings of the 20th National Conference on Artifical
  Intelligence}, 267--274.

\bibitem[{Liu, Chen, and Qin(2015)}]{Liu_Chen_Qin_2015}
Liu, T.-Y.; Chen, W.; and Qin, T. 2015.
\newblock Mechanism Learning with Mechanism Induced Data.
\newblock In \emph{Proceedings of the 29th AAAI Conference on Artificial
  Intelligence}, volume~29.

\bibitem[{Manisha, Jawahar, and Gujar(2018)}]{MJG18}
Manisha, P.; Jawahar, C.~V.; and Gujar, S. 2018.
\newblock Learning Optimal Redistribution Mechanisms Through Neural Networks.
\newblock In \emph{Proceedings of the 17th International Conference on
  Autonomous Agents and MultiAgent Systems}, 345–353.

\bibitem[{Mohri and Medina(2014)}]{pmlr-v32-mohri14}
Mohri, M.; and Medina, A.~M. 2014.
\newblock Learning Theory and Algorithms for Revenue Optimization in Second
  Price Auctions with Reserve.
\newblock In Xing, E.~P.; and Jebara, T., eds., \emph{Proceedings of the 31st
  International Conference on Machine Learning}, volume~32 of \emph{Proceedings
  of Machine Learning Research}, 262--270. Bejing, China: PMLR.

\bibitem[{Morgenstern and Roughgarden(2016)}]{learningauctions-morgenstern16}
Morgenstern, J.; and Roughgarden, T. 2016.
\newblock Learning Simple Auctions.
\newblock In Feldman, V.; Rakhlin, A.; and Shamir, O., eds., \emph{29th Annual
  Conference on Learning Theory}, volume~49 of \emph{Proceedings of Machine
  Learning Research}, 1298--1318. Columbia University, New York, New York, USA:
  PMLR.

\bibitem[{Myerson and Satterthwaite(1983)}]{MyeSat83}
Myerson, R.~B.; and Satterthwaite, M.~A. 1983.
\newblock Efficient Mechanisms for Bilateral Trading.
\newblock \emph{Journal of Economic Theory}, 29(2): 265--281.

\bibitem[{Nisan(2007)}]{IntroMechDesign}
Nisan, N. 2007.
\newblock Introduction to Mechanism Design (for Computer Scientists).
\newblock In \emph{Algorithmic Game Theory}, chapter~9, 209–241. Cambridge
  University Press.

\bibitem[{Ostrovsky(2008)}]{ostrovsky2008stability}
Ostrovsky, M. 2008.
\newblock Stability in Supply Chain Networks.
\newblock \emph{American Economic Review}, 98(3): 897--923.

\bibitem[{Othman and Sandholm(2009)}]{SandhomOthman2009PervasiveMS}
Othman, A.; and Sandholm, T. 2009.
\newblock How Pervasive is the Myerson-Satterthwaite Impossibility?
\newblock In \emph{Proceedings of the 21st International Joint Conference on
  Artificial Intelligence}, IJCAI'09, 233–238. San Francisco, CA, USA: Morgan
  Kaufmann Publishers Inc.

\bibitem[{Parkes(2001)}]{parkes01}
Parkes, D.~C. 2001.
\newblock \emph{Iterative Combinatorial Auctions: {A}chieving Economic and
  Computational Efficiency}.
\newblock Ph.D. thesis, University of Pennsylvania.

\bibitem[{Rahme, Jelassi, and Weinberg(2021)}]{rahme2021auction}
Rahme, J.; Jelassi, S.; and Weinberg, S.~M. 2021.
\newblock Auction Learning as a Two-Player Game.
\newblock In \emph{International Conference on Learning Representations}.

\bibitem[{Rheingans-Yoo(2020)}]{IndirectMechanismDesignThesis}
Rheingans-Yoo, D. 2020.
\newblock Reinforcement Learning for Indirect Mechanism Design.
\newblock Bachelor's Thesis, Harvard College.

\bibitem[{Sandholm(2003)}]{AMDCP03}
Sandholm, T. 2003.
\newblock Automated Mechanism Design: {A} New Application Area for Search
  Algorithms.
\newblock In Rossi, F., ed., \emph{Principles and Practice of Constraint
  Programming -- CP 2003}, 19--36. Berlin, Heidelberg: Springer Berlin
  Heidelberg.
\newblock ISBN 978-3-540-45193-8.

\bibitem[{Schlegel(2022)}]{schlegel2022structure}
Schlegel, J.~C. 2022.
\newblock The Structure of Equilibria in Trading Networks with Frictions.
\newblock \emph{Theoretical Economics}, 17(2): 801--839.

\bibitem[{Tacchetti et~al.(2022)Tacchetti, Strouse, Garnelo, Graepel, and
  Bachrach}]{tacchetti2022learning}
Tacchetti, A.; Strouse, D.; Garnelo, M.; Graepel, T.; and Bachrach, Y. 2022.
\newblock Learning Truthful, Efficient, and Welfare Maximizing Auction Rules.
\newblock In \emph{ICLR 2022 Workshop on Gamification and Multiagent
  Solutions}.

\bibitem[{Tang(2017)}]{ijcai2017-RLAMDAdAuctions}
Tang, P. 2017.
\newblock Reinforcement Mechanism Design.
\newblock In \emph{Proceedings of the 26th International Joint Conference on
  Artificial Intelligence (IJCAI-17)}, 5146--5150.

\bibitem[{Zheng et~al.(2021)Zheng, Trott, Srinivasa, Parkes, and
  Socher}]{AIEconomist}
Zheng, S.; Trott, A.; Srinivasa, S.; Parkes, D.~C.; and Socher, R. 2021.
\newblock The {AI} Economist: {O}ptimal Economic Policy Design via Two-level
  Deep Reinforcement Learning.
\newblock \emph{CoRR}, abs/2108.02755.

\end{thebibliography}

\clearpage
\appendix

\section{When the payment rule is given by other means}

Our approach may also be applied when $\pi$ is determined by other means.
Namely, for any payment rule $\pi$ for
$\calT^+(\calV)=(\calN,\Omega,\calV)$, we may design a mechanism
$(\phi,\tau',\pi_0)$ with no payment between players, and convert $\tau'$ to
$\tau$ via \eqref{long:eq:payment_conversion} to obtain the mechanism
$(\phi,\tau,\pi)$.  See Algorithm~\ref{long:alg:design}.

\begin{algorithm}
    \caption{Mechanism design for trading network with IP}
    \hspace*{\algorithmicindent} \textbf{Input: } A trading network $(\calN,\Omega,\calV)$ and a payment rule $\pi$; Desirable properties\\
    \hspace*{\algorithmicindent} \textbf{Output: } A mechanism $(\phi,\tau,\pi)$ that achieves the desirable properties
    \begin{algorithmic}[1]
    \STATE For $(\calN,\Omega,\calV)$, design a mechanism $(\phi,\tau',\pi_0)$ with no payment between players that achieves the desirable properties
    \STATE Let
    $\tau_i(v)
    =
    \tau_i'(v)
    + \sum_{\omega\in \phi(v)_{i \to}} \pi(\omega;v)
    - \sum_{\omega\in \phi(v)_{\to i}} \pi(\omega;v)$
    for any $(i,v)\in\calN\times\calV$
    \end{algorithmic}
    \label{long:alg:design}
\end{algorithm}

In particular, for an arbitrary payment rule $\pi$, one may first find a Groves mechanism $(\phi^\star,\tau')$  and transform it into a mechanism $(\phi^\star,\tau)$ by following the procedure in Algorithm~\ref{long:alg:design}.  Even though $(\phi^\star,\tau)$ may not be a Groves mechanism in the sense of \eqref{long:eq:groves_allocation}-\eqref{long:eq:groves_payment}, DSIC and Efficiency of $(\phi^\star,\tau)$ are guaranteed by the corresponding properties of $(\phi^\star,\tau')$ according to Theorem~\ref{long:thrm:nopayment}.

\end{document}